\newtheorem{theorem}{Theorem}
\newtheorem{lemma}{Lemma}
\newtheorem{remark}{Remark}
\begin{document}

\title{On the Feasibility of Full-duplex Large-scale MIMO Cellular Systems}

\author{\IEEEauthorblockN{Jeongwan Koh, \IEEEmembership{Student Member, IEEE}, Yeon-Geun Lim, \IEEEmembership{Student Member, IEEE}, Chan-Byoung Chae, \IEEEmembership{Senior Member, IEEE}, and Joonhyuk Kang, \IEEEmembership{Member, IEEE}} }

\maketitle

\begin{abstract}
This paper concerns the feasibility of full-duplex large-scale multiple-input-multiple-output (MIMO) cellular systems. We first propose a pilot transmission scheme and assess its performance, specifically the ergodic sum-rate. The proposed scheme -- the simultaneous pilot transmission (SPT) -- enables to reduce pilot overhead, where the pilot overhead depends on the number of antennas at the base station (BS), since the self-interference channel has to be estimated. We consider two multicell scenarios-- cooperative and non-cooperative multicell systems--,  and derive the analytic model of the ergodic achievable sum-rate for cell-boundary users. The model is derived by applying a simple linear filter, i.e., matched filter or zero-forcing filter, to the BS. In the analytic model, we also consider large-scale fading, pilot contamination, transmitter noise and receiver distortion. Exploiting the derived analytic model, the feasibility of full-duplex large-scale MIMO systems is shown with respect to system parameters. In the end, we confirm that our analytic model matches well the numerical results and the SPT has advantages over other pilot transmission methods.
\end{abstract}

\begin{IEEEkeywords} 
Full-duplex, large-scale MIMO, massive MIMO, cloud radio access networks, and channel estimation.
\end{IEEEkeywords}

\IEEEpeerreviewmaketitle

\section{Introduction}
Over the last few decades, researchers have developed multiple-input-multiple-out (MIMO) technologies to provide more users with higher date rates and greater reliability~\cite{Gesbert2007shifting}. The proliferation of smart devices has led to an explosive rise in demand for higher data rate \cite{itu2015}. To handle the burgeoning data traffic, researchers have tackled various issues in fifth generation (5G) wireless communication. The key objectives for the upcoming 5G are to enhance  spectral efficiency, reduce latency, and develop cost-effective energy and hardware technology \cite{Jeff14what}. The literature \cite{Boccardi14five, Osseiran2014scenarios} has introduced some promising candidates for achieving such objectives; they  include large-scale MIMO (massive MIMO), full-duplex, millimeter wave, and cloud radio access networks (CRAN). 

Attracting a great deal of attention among these has been full-duplex technology. It is able to double spectral efficiency and reduce the roundtrip latency of system by  supporting, simultaneous  downlink (DL) and uplink (UL) transmission.  For a long time, scholars have discussed the concept of full-duplex-- the notion of sharing resources such as frequency and time. Interest, though, has been renewed now that engineers can implement full-duplex thanks to advanced antenna design and radio-frequency (RF) circuit  \cite{sim2017nonlinear, kim2017asymmetric, chung2017compact, Choi2010achieving, Bharadia2013full, chung2015prototyping}.  The most significant hurdle in full-duplex is coming up with a way to cancel the self-interference (SI) that occurs at the receiver (Rx). Such interference is caused by the signal coming from the transmitter (Tx) of the same full-duplex node. Indeed, that signal's power is much greater than the received signal power of UL users. In typical microcells (with up to 2km range), for example,  to suppress the SI to the noise floor (-90dBm), one needs approximately 125dB of cancellation.   The authors in \cite{Choi2010achieving}, \cite{Bharadia2013full} recently showed that SI can be mitigated by analog and digital cancellation in Zigbee and WiFi systems. In addition to SI, there exists user-user interference. In a single-cell full-duplex system, this type of interference occurs at the received signal of DL users and is caused by the UL signal from adjacent UL users.  Since the user-user interference cannot be mitigated by the transmit beamforming of BS, the authors in \cite{Karakus2015opportunistic} proposed a type of simple opportunistic joint UL-DL scheduling based on multiuser diversity. 

In contrast to a single-cell full-duplex system, a multicell full-duplex system must contend with 
base-station (BS)-BS interference; also user-user interference worsens due to the presence of UL users of the adjacent cell. BS-BS interference occurs at the BS's UL signal owing to the adjacent BS's DL signal. In this context, the author in \cite{Lee2015hybrid} derived the throughput by accounting for AP spatial density, SI cancellation capability, and Tx power of APs and users in the multicell full-duplex system.  For  user-user and  BS-BS interference, however, the authors either assumed those to be zero or approximated them at a simple certain value. In \cite{goyal2015full}, the authors considered this interference in a practical manner by proposing the scheduling method based on the greedy algorithm and  geometric programming. Researchers may need to reconsider, however, the assumption regarding the centralized scheduler, which can access all global system information such as the channel between the DL and  UL users. In this context, the combination of a full-duplex cellular system and CRAN seems to be very attractive for handling  user-user and BS-BS interference; such a combination would deal with the interference by means of centralized scheduling and cooperation of BSs. The authors in \cite{simeone2014full} used an information theoretic viewpoint based on the Wyner channel model to show the potential of a full-duplex system in CRAN.  In return for cooperation of the BSs, however,  quantization noise occurs over the fron-thaul connected between the central unit (CU) and BSs due to limited front-haul capacity. 

It seems inevitable that the industry will exploit large-scale MIMO at the BS side in order to provide increased spectral efficiency or support many users. Indeed, the target for 5G is a 1000-fold increase in spectral efficiency. Moreover, if the BS is able to perform full-duplex, this guarantees highly increased spectral efficiency while reducing roundtrip latency \cite{yin2013full, lim2015performance2}. Another advantage of the full-duplex large-scale MIMO system is that, as the number of antennas increases, the system reduces the average SI power by scaling down the transmit power per antenna. This reduction occurs because the average SI power of each BS antenna  depends  not on the number of transmit antennas but only on the total transmit power and the SI channel gain  \cite{yin2013full, Rusek13SPMAG, Ngo13TCOM}. However, the performance of a full-duplex large-scale MIMO system is limited by a critical pilot overhead problem. Typically, in a time-division duplex (TDD)-based half-duplex large-scale MIMO system, we can exploit the UL pilot, which  depends only on the number of DL users and their antennas to estimate the DL channel by means of the property of channel reciprocity \cite{Rusek13SPMAG}. As a result, we are able to retain the pilot overhead even if we increase the number of transmit antennas. In contrast, exploiting the DL pilots in the full-duplex system is inevitable, since the DL pilots depend on the number of transmit antennas at BS to estimate the SI channel \cite{day2012full}. Moreover, since a full-duplex enabled BS transmits and receives signals simultaneously, we need pilots to estimate both DL and UL channels within each coherence time. A full-duplex system, in other words, requires more than twice the pilot length (at least) compared with  a half-duplex system as shown in Figs. \ref{fig: pilot} (a) and (b). This problem, which can seriously limit the performance of full-duplex systems, is exacerbated as the number of antennas at BS increases. 

In this paper, we approach the large-scale MIMO cellular system with two motivations, namely, to develop an overhead reduction method and to analyze the feasibility of the system under various interference types. As noted above, pilot overhead problem and investigating the effect of the interference which does not occur in conventional half-duplex system have to be tackled. The main contributions of this paper are as follows: 
%

\setlength{\leftskip}{20pt}
\setlength{\rightskip}{20pt}
$\bullet$ \emph{Proposed pilot transmission scheme-- simultaneous pilot transmission (SPT)}: In order to reduce pilot overhead, we propose the SPT depicted in Fig. \ref{fig: pilot} (c). Conceptually, the SPT transmits the UL pilot and the SI pilot which is the DL pilot for SI channel estimation simultaneously in the time domain. We first obtain the estimated channel using the minimum-mean-square-error (MMSE) channel estimation and then investigate the achievable sum-rate of SPT by comparing it to the non-simultaneous pilot transmission (nSPT). The nSPT transmits all pilots orthogonally in the time domain, as in the conventional scheme. We observe from this that SPT holds two distinct advantages -- it reduces  pilot overhead and achieves additional power gain on estimated channels (induced by the difference of pilot length between the SI and UL pilots). At the same time, however, the channel estimation performance can be degraded due to interference between pilots. Finally, we observe this trade-off with respect to various system parameters. 

$\bullet$ \emph{Derivation of analytic model of ergodic achievable sum-rate for cell-boundary users in cooperative multicell system}: We obtain the analytic model for cell-boundary users which are  bottlenecks  for both DL and UL transmission;  at the same time, we obtain the worst case for system performance \cite{lim2015performance}. 
In addition, in the multicell scenario, cell-boundary users experience more serious user-user interference from UL users of adjacent cells.  Unlike previous studies \cite{yun2016intra, goyal2014improving}, we observe that, withe the  increase in the number of antennas at BS, we can provide a better sum-rate through a full-duplex system than is possible through a half-duplex system for cell-boundary users.  To investigate the performance in various scenarios, this study considers two scenarios -- a non-cooperative system and a cooperative multicell system. To provide an accurate analytic model in practice we consider the following four things: 1) large-scale fading; 2) pilot contamination -- a critical problem in large-scale multicell systems due to the idenntical set of UL pilots for all cells \cite{Jose11TWC}; 3) Tx noise and Rx distortion induced by limited dynamic range of Tx and Rx, as these are not negligible when we consider the SI channel \cite{day2012full}; and 4)  quantization noise for fronthaul between BSs and CU in cooperative multicell system, as we have  limited fronthaul capacity \cite{kang2014joint}. 

$\bullet$ \emph{Feasibility of full-duplex large-scale MIMO system for non- and cooperative multicell system}: Using the obtained analytic model, we analyze the behavior of a full-duplex system with very large-scale antennas at BS for two multicell system scenarios. Furthermore, we obtain the conditions needed to maintain the reliable region defined as the interval that guarantees a  performance of a full-duplex system better than those of half-duplex system. 

\setlength{\leftskip}{2pt}
\setlength{\rightskip}{2pt}

This paper is organized as follows. In Section \ref{sec: sys}, we describe the system model of UL and DL transmission for two different system scenarios.  Section \ref{sec: channel estimation} addresses the operation of  the nSPT and SPT, and shows the determination of  the distribution  resulting from the MMSE channel estimation.  Section \ref{sec: sum-rate}  uses the MF and ZF methods to derive the analytic model of ergodic achievable sum-rate for two scenarios. Section \ref{sec: analysis} shows the performance analysis, and  Section \ref{sec: simul} presents the simulation results. Conclusions are drawn  in Section \ref{sec: conclusion}.

$Notation:$ $\mathbf{A}^{H}$, $\mathbf{A}^{T}$ and $\mathbf{A}^*$ denote conjugate transpose, transpose and conjugate of matrix $\mathbf{A}$, respectively.  $\mathrm{Var}[X]$ and $\mathbb{E}[X]$ respectively imply the variance and average of random variable $X$. $\mathrm{diag}(\mathbf{A})$ and $\mathrm{blkdiag}[\mathbf{A}_1, \ldots, \mathbf{A}_n]$ denotes diagonal elements of matrix $\mathbf{A}$ and a block-diagonal matrix whose diagonal elements are $\mathbf{A}_1, \ldots, \mathbf{A}_n$. For convenience, we define $\mathcal{C}(\gamma) = \log_2(1+\gamma) $, where $\gamma$ is a random variable. We denote the column vector normalization of matrix $\mathbf{F}$ as $\mathbf{F}/||\mathbf{F}_v||$.  $\mathbf{C}_{\mathbf{XY}}= \mathbb{E} \big[ (\mathbf{X}-\mathbb{E}[\mathbf{X}])(\mathbf{Y}-\mathbb{E}[\mathbf{Y}])^\dagger \big] $ denotes the covariance matrix of $\mathbf{X}$ and $\mathbf{Y}$.

\section{System model} \label{sec: sys}
We consider the system model for two multicell scenarios -- the non-cooperative multicell and  the cooperative multicell systems. In both, we assume $N$ cells, all of which consist of a full-duplex BS with $M_t / M_r$ RF chains for transmitter (Tx) / receiver (Rx) and $K_\text{DL} / K_\text{UL}$ half-duplex users with single antenna for DL / UL transmission for each cell. The Tx and Rx RF chains are independent of each other. We assume  Rayleigh fading channel for all channels.  A line-of-sight (LoS) element is presented explicitly for the BS-BS and SI channels; nonetheless, we can still assume  Rayleigh fading channel since, prior to digital signal processing, analog cancellation can greatly mitigate the LoS component on the Rx side. All channels defined in this paper are expressed as $\mathbf{G}^\text{x}=\mathbf{H}^{\text{x}}(\mathbf{D}^{\text{x}})^{\frac{1}{2}}$, 
where $\mathbf{H}^\text{x}$ includes fading coefficients which follow zero mean and unit variance, and $\text{x}\in\{d, u, \text{BS}, \text{UE}\}$ indicates the DL and uplink channels,  the channel between BSs and the channel between UL users and DL users. $(\mathbf{D}^{\text{x}})^{\frac{1}{2}}$ denotes a diagonal matrix with  $[\mathbf{D}^\text{x}]_{kk}=\rho_{k}^2$ for the $k$th diagonal element which represents the geometric attenuation. We assume that $\mathbf{D}^{\text{x}}$ is known in advance, as it changes very slowly with time. Accordingly, instead of estimating $\mathbf{G}^\text{x}$, we focus on estimating $\mathbf{H}^\text{x}$. In order to mitigate the SI at each BS,  we adopt a time-domain cancellation that directly subtracts the estimated SI channel from the received signal. In this context, the variance of the estimation error of the SI channel becomes the power of the residual SI. Moreover, considering hardware impairment, we reflect Tx noise and Rx distortion, which need to be considered in signals   transmitted or received signal over the  SI channel \cite{day2012full}.  Since the distance between BS and the users is sufficiently large, we ignore Tx noise and Rx distortion for all channels, other than the SI channel. In the following subsections, we  address details of the  two different scenarios.

\subsection{System Scenario 1: Non-Cooperative Multicell System}
Here, based on local estimated channel state information (CSI), each BS produces a precoder for DL transmission and a detection filter for UL transmission. Since no cooperation exists between any of the BSs, intercell interference is only slightly mitigated. 

\subsubsection{Downlink Transmission} 
For DL channels between BS of cell $i$ and $K_{\text{DL}}$ users of cell $j$, we define $(\mathbf{G}_{i,j}^d)^T = ({\mathbf{H}}_{i,j}^d (\mathbf{D}_{i,j}^d)^{\frac{1}{2}})^T \in \mathbb{C}^{ M_t\times K_\text{DL}} = \big[\mathbf{g}^d_{ij,1} \ldots \mathbf{g}^d_{ij,K_\text{DL}}\big]$, where $[\mathbf{D}_{i,j}^d]_{kk}=(\rho_{ij,k}^d)^2$. We define the $k^\text{th}$ column of the estimated DL channel $(\hat{\mathbf{G}}_{i,j}^d)^T$ and the estimation error  $(\mathbf{\Upsilon}_{i,j}^d)^T$ as $\hat{\mathbf{g}}_{ij,k}^d \sim \mathcal{CN}(0, \hat{\rho}_{ij.k}^2 \mathbf{I}_{M_t})$ and $\mathbf{\epsilon}_{ij,k}^d \sim \mathcal{CN}(0, \bar{\rho}_{ij.k}^2 \mathbf{I}_{M_t})$, where $\mathbf{G}_{i,j}^d = \hat{\mathbf{G}}_{i,j}^d + \mathbf{\Upsilon}_{i,j}^d$. $\hat{\rho}_{ij.k}^2$ and $\bar{\rho}_{ij.k}^2$ are determined by pilot transmission method which will be discussed in Sec. \ref{sec: channel estimation}. We define the precoder of cell $i$, $\mathbf{F}_i=[\mathbf{f}_{i,1} \ldots \mathbf{f}_{i,K_{DL}}] \in \mathbb{C}^{M_t \times K_\text{DL}}$, and the channel between DL users of cell $i$ and UL users of cell $j$, $(\mathbf{G}_{i,j}^\text{UE})^{\text{T}}=({\mathbf{H}}_{i,j}^\text{UE}
\mathbf{D}_{i,j}^\text{UE})^\mathrm{T}=[\mathbf{g}_{ij,1}^\text{UE} \ldots \mathbf{g}_{ij,K_{DL}}^\text{UE}] \in \mathbb{C}^{ K_\text{UL} \times K_\text{DL}}$, where $[\mathbf{D}_{i,j}^\text{UE}]_{kk}=(\rho_{ij,k}^\text{UE})^2$. The received signal of DL user $k$ at cell $i$ is 
\begin{eqnarray} \label{eq:y_NoCRAN}
\hspace{-1cm}y_{d,ik}=\underbrace{\sqrt{P_d}(\mathbf{\hat{g}}_{ii,k}^d)^T \mathbf{f}_{i,k} s_{i,k}^d} _{\text{desired\; signal}} +  \underbrace{\sqrt{P_d}\sum_{\ell=1,\ell\neq k}^{K_\text{DL}} (\mathbf{g}_{ii,k}^d)^T \mathbf{f}_{i,\ell} s_{i,\ell}^d}_{\text{intra-cell \; interference}} +  \underbrace{ \sqrt{P_d}\sum_{j=1, j\neq i}^N (\mathbf{g}_{ij,k}^d)^T \mathbf{F}_j \mathbf{s}_j^d }_{\text{inter-cell \; interference}} \nonumber\\ + 
\underbrace{\sqrt{P_u}\sum_{j=1}^N (\mathbf{g}_{ij,k}^\text{UE})^T \mathbf{s}_j^u}_{\text{UE-UE \; interference}} + \underbrace{\sqrt{P_d}(\mathbf{\epsilon}_{ii,k}^d)^T \mathbf{f}_{i,k} s_{i,k}^d }_{\text{estimation\; error}} + \underbrace{ n_{di,k} }_{\text{noise}}, \hspace{-1.5cm} 
\end{eqnarray}
where $\mathbf{s}_j^d \in \mathbb{C}^{K_\text{DL} \times 1}$ and $\mathbf{s}_j^u \in \mathbb{C}^{K_\text{UL} \times 1}$ denote the  DL and UL transmitted symbols of cell $j$. Without loss of generality, we assume that $ \mathbf{s}_j^d (\mathbf{s}_j^d)^H = \mathbf{I}_{K_\text{DL}}$ and $\mathbf{s}_j^u (\mathbf{s}_j^u)^H = \mathbf{I}_{K_\text{UL}}$, and the Gaussian noise $n_{i,k}^d \sim \mathcal{CN}(0,n_0)$. In (\ref{eq:y_NoCRAN}), $P_d$ and $P_u$ denote downlink transmit power per antenna and uplink transmit power per user. We note that the UE-UE interference is induced by adjacent UL users.  
 
\subsubsection{Uplink Transmission}
In a  manner similar to that of DL transmission, for UL channels between the BSs of cell $i$ and $K_{\text{UL}}$ users of cell $j$, we define $\mathbf{G}_{i,j}^u = {\mathbf{H}}_{i,j}^u (\mathbf{D}_{i,j}^u)^{\frac{1}{2}} \in \mathbb{C}^{M_r \times K_\text{UL}} = \big[\mathbf{g}^u_{ij,1} \ldots \mathbf{g}^u_{ij,K_\text{UL}}\big]$, where $[\mathbf{D}_{i,j}^u]_{kk}=(\rho_{ij,k}^u)^2$ . We also define the channels between BSs of cell $i$ and cell $j$ as $\mathbf{G}_{i,j}^\text{BS}= {\mathbf{H}}_{i,j}^\text{BS} (\mathbf{D}_{i,j}^\text{BS})^{\frac{1}{2}}\in \mathbb{C}^{M_r \times M_t}$, where $\mathbf{D}_{i,j}^\text{BS} = (\rho_{ij}^\text{BS})^2 \mathbf{I}_{M_t}$ for $i\neq j$. When it comes to the SI channel, i.e., the $i=j$ case, $\mathbf{D}^{\text{BS}}_{i,i}$ is a symmetric matrix whose elements are defined as $[\mathbf{D}^{\text{BS}}_{i,i}]_{\ell m}= (\rho_{ii,\ell m}^\text{BS})^2$ which reflects the distance between each antenna.  The detection filter of cell $i$ is defined as $\mathbf{W}_i^{{T}}=[\mathbf{w}_{i,1} \ldots \mathbf{w}_{i,K_\text{UL}}] \in \mathbb{C}^{M_r \times K_\text{UL}}$.  The received signal for user $k$ of cell $i$ at BS is expressed as
\begin{eqnarray}\label{eq:y_NoCRAN_ul}
\hspace{-0.8cm} y_{u,ik}=\hspace{-0.5cm}&&\underbrace{\sqrt{P_u}(\mathbf{w}_{i,k})^T \hat{\mathbf{g}}_{ii,k}^u s_{i,k}^u}_{\text{desired\; signal}} + \underbrace{\sqrt{P_u}(\mathbf{w}_{i,k})^T \!\!\!\! \sum_{l=1,l\neq k}^{K_\text{UL}} \!\!\!\! \mathbf{g}_{ii,l}^u s_{i,l}^u}_{\text{intra-cell\; interference}} 
 + \underbrace{\sqrt{P_u}(\mathbf{w}_{i,k})^T \!\!\!\! \sum_{j=1, j\neq i}^N  \!\!\!\! \mathbf{G}_{ij}^u \mathbf{s}_j }_{\text{inter-cell\; interference}}   \nonumber\\ &&\hspace{-1.1cm}+ \underbrace{\sqrt{P_d}(\mathbf{w}_{i,k})^T (\mathbf{G}_{ii}^\text{BS}-\hat{\mathbf{G}}_{ii}^\text{BS}) \mathbf{F}_i \mathbf{s}_i^d }_{\text{residual self\; interference}} + \underbrace{\sqrt{P_d}(\mathbf{w}_{i,k})^T \!\!\!\! \sum_{j=1, j\neq i}^N \!\!\!\! \mathbf{G}_{ij}^\text{BS} \mathbf{F}_j \mathbf{s}_j^d  }_{\text{BS-BS\; interference}}+ \underbrace{\sqrt{P_u}(\mathbf{w}_{i,k})^T {\mathbf{\epsilon}}_{ii,k}^u s_{i,k}^u}_{\text{estimation\; error} }
 \nonumber\\ &&\hspace{-1cm}+ \underbrace{\sqrt{P_d}(\mathbf{w}_{i,k})^T\mathbf{G}_{ii}^\text{BS} \mathbf{F}_i \mathbf{\psi}_{i}}_{\text{Tx\; noise}} + \underbrace{(\mathbf{w}_{i,k})^T \mathbf{\delta}_{i}}_{\text{Rx\; distortion}} + \underbrace{(\mathbf{w}_{i,k})^T \mathbf{n}_{ui,k}}_{\text{noise}}.  
\end{eqnarray}
Based on \cite{day2012full}, Tx noise and Rx distortion occurring at cell $i$ are modeled as $\mathbf{\psi}_i \in \mathbb{C}^{K_\text{DL} \times 1}$ and $ \mathbf{\delta}_i \in \mathbb{C}^{M_r \times 1} $, respectively, where the distribution of each is given by $\mathbf{\psi}_{i} \sim \mathcal{CN}\big(0, \alpha \text{diag}(\mathbf{s}_{i} \mathbf{s}_{i}^H) \big)$ and $\mathbf{\delta}_{i} \sim \mathcal{CN}\big(0, \beta \text{diag}( \mathbf{y}_{i}^\text{SI} (\mathbf{y}_{i}^\text{SI}) ^H) \big)$. 
Typically, $\alpha \ll 1$ and $\beta \ll 1$ \cite{suzuki2008transmitter}. $ \mathbf{y}_i^\text{SI} = \sqrt{P_d} \mathbf{G}_{ii}^\text{BS} \mathbf{F}_i (\mathbf{s}_i^d + \mathbf{\psi}_i) + \mathbf{n}_{ui,k}$ denotes an undistorted signal received over the SI channel.
\begin{figure}[t]
\centering
\includegraphics[width=17cm, height=5.3cm]{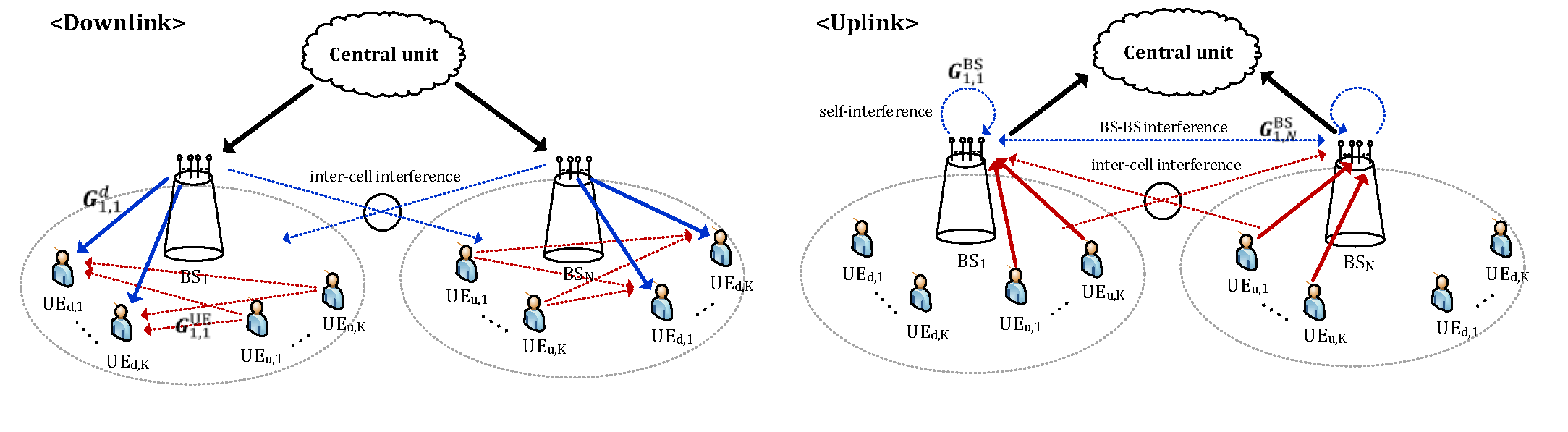}
\caption{DL and UL transmission in a cooperative multicell system with a central unit, full-duplex  BS, and half-duplex users. Here, UE stands for user equipment. Though we describe DL and UL transmission separately for convenience, both transmission operate simultaneously. } \label{fig:sys}
\end{figure}

\subsection{System Scenario 2: Cooperative Multicell System }
As depicted in Fig. \ref{fig:sys}, we have a CU that is connected to each BS via front-haul with limited capacity $C_d / C_u$ for DL / UL transmission. Based on the collected global CSI, the CU produces a precoder and a detection filter based on the collected global CSI.  As a result, the system enables  mitigation of the intercell interference that stems from BS cooperation. 

\subsubsection{Downlink Transmission}
We define all DL channels of the system between BSs and DL users as $\mathbf{G}^d=\mathbf{H}^d (\mathbf{D}^d)^{\frac{1}{2}} \in \mathbb{C}^{K_\text{DL}N \times M_t N}$, and $(\mathbf{G}^d)^T = \big[\mathbf{g}_{11}^d\ldots \mathbf{g}^d_{1K_\text{DL}} \ldots \mathbf{g}_{N1}^d \ldots \mathbf{g}^d_{NK_\text{DL}} \big]$, for convenience, where $\mathbf{g}_{jk}^d \sim \mathcal{CN}(0, \mathrm{blkdiag}\big[(\rho_{1j,k}^d)^2 \mathbf{I}_{M_t} \ldots (\rho_{Nj,k}^d)^2 \mathbf{I}_{M_t}\big]$. 
We define $\mathbf{F}^\text{C}=[\mathbf{f}_{11}^\text{C} \ldots \mathbf{f}_{1K_\text{DL}}^\text{C} \ldots \mathbf{f}_{N1}^\text{C} \ldots \mathbf{f}_{NK_\text{DL}}^\text{C}] \in \mathbb{C}^{M_tN \times K_\text{DL}N}$ as a precoding matrix.  The quantization noise over DL front-haul of cell $i$ is described as $ \mathbf{q}_i^d \sim \mathcal{CN}(0,(\sigma_i^d)^2 \mathbf{I}_{M_t}),$ where $(\sigma_i^d)^2=P_s\text{E}||\mathbf{x}_i||^2/(2^{C_d}-1)$ due to the limited front-haul capacity $C_d=\log_2(1+ P_s \mathbb{E}||\mathbf{x}_i||^2/(\sigma_i^d)^2 ) $.  $P_s$ denotes the desired symbol power and   follows $P_s = P_d(1-2^{-C_d})$ due to $P_d = P_s + (\sigma_i^d)^2$ \cite{simeone2014full}. The precoded signal transmitted from CU to the $i^\text{th}$ BS is defined as $\mathbf{x}_i=\mathbf{F}_i^\text{C}\mathbf{s}^d$, where $\mathbf{F}_i^\text{C}$ is row vectors of $\mathbf{F}$ from $(M_t(i-1)+1)^\text{th}$ to $(M_ti)^\text{th}$. Then, the quantization noise for $N$ cells is  $\mathbf{q}^d \sim \mathcal{CN} \big(0, \text{blkdiag}[(\sigma_{1}^d)^2 \mathbf{I}_{M_t}, \ldots, (\sigma_{N}^d)^2 \mathbf{I}_{M_t}] \big)$.
The received signal of user $k$ at cell $i$ is  
\begin{eqnarray} \label{eq:y_CRAN}
\hspace{-1cm}y_{d,ik}^\text{C}= 
\underbrace{\sqrt{P_s} (\hat{\mathbf{g}}_{i,k}^d)^T \mathbf{f}_{i,k}^{\text{C}} \mathbf{s}^d_{i,k}}_{\text{desired\; signal}} + \underbrace{\sqrt{P_d}\!\!\!\!\!\!\!\!\!\!\!\!\!\!\!\!\!\sum_{(n,j)=(1,1),(n,j)\neq(i,k)}^{(N,K_\text{DL})}  \!\!\!\!\!\!\!\!\!\!\!\!\!\!\!\!\! (\mathbf{g}_{i,k}^d)^T \mathbf{f}_{n,j}^\text{C} \mathbf{s}^d_{n,j}}_{\text{intra,inter-cell\; interference}} + 
\underbrace{\sqrt{P_u} \sum_{j=1}^N (\mathbf{g}_{ij,k}^\text{UE})^T \mathbf{s}_j^d}_{\text{UE-UE\; interference}}\nonumber\\ + \underbrace{({\mathbf{g}}_{i,k}^d)^T \mathbf{q}^d}_{\text{quantization\; noise}} +  \underbrace{\sqrt{P_s} (\hat{\mathbf{\epsilon}}_{i,k}^d)^T \mathbf{f}_{i,k}^\text{C} s_{i,k}^d}_{\text{estimation\; error}} + \underbrace{n_{d,ik}^\text{C}}_{\text{noise}}.\hspace{-1.6cm}
\end{eqnarray}

\subsubsection{Uplink Transmission}
In a manner similar to that of the DL transmission, we define the all UL channel of the system as $\mathbf{G}^u=\mathbf{H}^u (\mathbf{D}^u)^{\frac{1}{2}} \in \mathbb{C}^{M_rN \times K_\text{UL}N}$. Based on the UL channels, CU produces detection filter defined as $(\mathbf{W}^\text{C})^T=[\mathbf{w}_{1,1}^\text{C} \ldots \mathbf{w}_{N,K}^\text{C}] \in \mathbb{C}^{MN \times K_\text{UL}N}$. The  signal received at $\text{BS}_j$ is
\begin{eqnarray}\label{eq: y_ul_i}
\mathbf{y}_j^u=\sqrt{P_u}\mathbf{G}_j^u \mathbf{s}^u + \sqrt{P_s} \mathbf{G}_j^{BS} \mathbf{F}^\text{C} \mathbf{s}^d + \sqrt{P_s}\mathbf{G}_{jj}^\text{BS}\mathbf{F}_j^\text{C}\mathbf{\psi}^\text{C}+ \mathbf{\delta}_j^\text{C}+ \mathbf{n}_j,
\end{eqnarray}
where $\mathbf{G}_j^u \in \mathbb{C}^{M_r \times K_\text{UL}N}$ and $\mathbf{G}_j^{BS} \in \mathbb{C}^{M_r \times M_tN}$ denote the row vectors from $(M_r(j-1)+1)^\text{th}$ to $(M_rj)^\text{th}$ of $\mathbf{G}^u$ and the channel between $\text{BS}_j$ and all other BSs, respectively.
In (\ref{eq:y_CRAN_ul}), $\mathbf{\delta}_j^\text{C}$ denotes the column vector from the $(M_r(j-1)+1)^\text{th}$ to the $(M_rj)^\text{th}$ of $\mathbf{\delta}^\text{C}$. The UL quantization noise of $\text{BS}_j$ is $\mathbf{q}_j^u \sim \mathcal{CN}(0,(\sigma_j^u)^2 \mathbf{I}_{M})$, where $(\sigma^u)^2=\mathbb{E}||\mathbf{y}_j||^2/(2^{C_u}-1)$. The  UL quantization noise of $N$ cells, $\mathbf{q}^u$ is also defined in a similar manner to that of the DL case. The received signal of user $k$ of cell $i$ at CU is 
\begin{eqnarray} \label{eq:y_CRAN_ul}
&&\hspace{-0.6cm}y_{u,ik}^\text{C}=\underbrace{\sqrt{P_u} (\mathbf{w}_{i,k}^{\text{C}})^T \hat{\mathbf{g}}_{i,k}^u s^u_{i,k}}_{\text{desired\; signal}} \;\; + \underbrace{\sqrt{P_u} \!\!\!\!\!\!\!\!\!\!\!\!\!\!\!\!\!\sum_{(n,j)=(1,1),(n,j)\neq(j,k)}^{(N,K_\text{UL})}\!\!\!\!\!\!\!\!\!\!\!\!\!\!\!\!\! (\mathbf{w}_{i,k}^{\text{C}})^T \mathbf{g}_{n,j}^u s^u_{n,j}}_{\text{intra,inter-cell \; interference} } + \underbrace{\sqrt{P_s} (\mathbf{w}_{i,k}^{\text{C}})^T \mathbf{G}^\text{BS}_\text{off} \mathbf{F}^\text{C} \mathbf{s}^d}_{\text{BS-BS\;interfernce}} \nonumber\\  
&&\hspace{-0.5cm}+\underbrace{\sqrt{P_u} (\mathbf{w}_{i,k}^{\text{C}})^T \mathbf{\epsilon}_{i,k}^u s^u_{i,k}}_{\text{estimaion \; error}}  
\underbrace{\sqrt{P_s} (\mathbf{w}_{i,k}^{\text{C}})^T (\mathbf{G}^\text{BS}_\text{diag} - \hat{\mathbf{G}}^\text{BS}_\text{diag}) \mathbf{F}^\text{C} \mathbf{s}^d}_{\text{residual\;SI}} +
\underbrace{(\mathbf{w}_{i,k}^{\text{C}})^T \mathbf{q}^u}_{\text{quantization\; noise}}+ \underbrace{\sqrt{P_s} (\mathbf{w}_{i,k}^{\text{C}})^T \mathbf{G}^\text{BS}_\text{diag} \mathbf{F}^\text{C} \mathbf{\psi}^\text{C}}_{\text{Tx\;noise}} \nonumber\\ 
&&\hspace{-0.5cm}+\underbrace{(\mathbf{w}_{i,k}^{\text{C}})^T \mathbf{\delta}^\text{C}}_{\text{Rx\;distortion}}+
\underbrace{(\mathbf{w}_{i,k}^{\text{C}})^T \mathbf{n}_{u,ik}^\text{C}}_{\text{noise}},
\end{eqnarray} \\
where $\mathbf{G}^\text{BS}=\mathbf{G}_\text{off}^\text{BS} + \mathbf{G}_\text{diag}^\text{BS}$. We define   $\mathbf{G}_\text{diag}^\text{BS}$ is a block-diagonal matrix whose diagonal elements are $[ \mathbf{G}_{1,1}^\text{BS} \ldots \mathbf{G}_{N,N}^\text{BS}]$. We define Tx noise and Rx distortion as $\mathbf{\psi}^\text{C} \sim \mathcal{CN}(0, \alpha \text{diag}(\mathbf{s}^d (\mathbf{s}^d)^H)$ and $\mathbf{\delta}^\text{C} \sim \mathcal{CN}(0, \beta \text{diag}(\mathbf{y}^\text{C,SI}(\mathbf{y}^\text{C,SI})^H )$, respectively, where $\alpha \ll 1,\; \beta \ll 1$, and $\mathbf{y}^\text{C,SI}=\sqrt{P_s}  \mathbf{G}^\text{BS}_\text{diag} \mathbf{F}^\text{C}(\mathbf{s}^d + \mathbf{\psi}^\text{C})  + \mathbf{n}_{u,ik}^\text{C}$ \cite{day2012full, suzuki2008transmitter}.  

\begin{figure}[t]
\centering
\includegraphics[width=16.5cm, height=9cm]{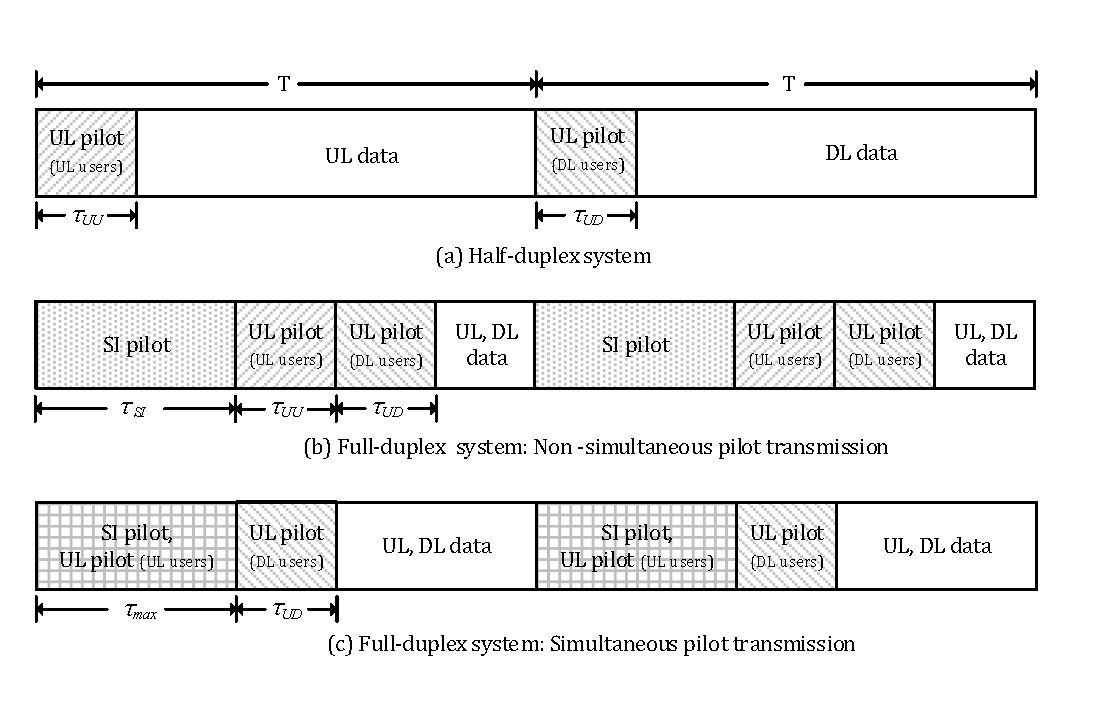}
\caption{Pilot transmission methods in half-duplex and full-duplex large-scale MIMO system. The users noted in parentheses beside the UL pilot indicates the sending location. That is, UL pilot (DL users) refers to the UL pilot sent from the DL users.     } \label{fig: pilot}
\end{figure}

\section{Proposed Pilot Transmission Scheme} \label{sec: channel estimation}
Here, we introduce two pilot transmission schemes in a TDD-based full-duplex large-scale MIMO system; in these schemes,  each BS performs channel estimation based on the received signal. One is the conventional scheme, namely nSPT;  the  other is our proposed scheme, SPT. Based on the channel reciprocity in large-scale MIMO systems, we exploit UL pilots to estimate both DL and UL channels \cite{Rusek13SPMAG, Ngo13TCOM}. We define two different UL pilots in order to distinguish their usage, where  $\mathbf{\Phi}_i^\text{UU} \in \mathbb{C}^{\tau_\text{UU} \times K_\text{UL}} \;(\tau_\text{UU} \ge K_\text{UL})$ denote the UL pilots sent from UL users to estimate UL channels which follows  $(\mathbf{\Phi}_i^\text{UU})^H \mathbf{\Phi}_i^\text{UU} = \mathbf{I}_{K_\text{UL}}$. In a similar manner, $\mathbf{\Phi}_i^\text{UD} \in \mathbb{C}^{\tau_\text{UD} \times K_\text{DL}} \;(\tau_\text{UD} \ge K_\text{DL})$ denotes the UL pilots sent from DL users to estimate the DL channels that satisfies $(\mathbf{\Phi}_i^\text{UD})^H \mathbf{\Phi}_i^\text{UD} = \mathbf{I}_{K_{DL}}$. 
We define the SI pilots to estimate the SI channels as $\mathbf{\Phi}_i^\text{SI} \in \mathbb{C}^{\tau_\text{SI} \times M_t} \; (\tau_\text{SI} \ge M_t), (\mathbf{\Phi}_i^\text{SI})^H \mathbf{\Phi}_i^\text{SI} = \mathbf{I}_{M_t}$.
Furthermore, because in large-scale MIMO, we use the same set of UL pilots in each cell, we consider the effect of pilot contamination in the channel estimation \cite{Jose11TWC}. We recall that Tx noise and Rx distortion on  signals received over SI channels are also considered. Throughout this paper, we use  MMSE channel estimation \cite{sengijpta1995fundamentals}.

\subsection{Conventional Scheme: Non-Simultaneous Pilot Transmission (nSPT)}
As shown in Fig. \ref{fig: pilot}, three different pilots are transmitted orthogonally in the time domain prior to sending the data needed to estimate the UL, DL and SI channels. Thus, the required pilot overhead is $(\tau_\text{UD} + \tau_\text{UU} + \tau_\text{SI})$. Since the UL and DL channels are estimated independently, we follows the results of channel estimation in \cite{Ngo13TCOM}. Here, we  describe only the procedures of SI channel estimation for nSPT. During SI pilot transmission, the received signal for Rx of BS at cell $i$ is 
\begin{eqnarray}
\mathbf{Y}_i= \sqrt{\tau_{SI}P_d} \mathbf{G}_{ii}^\text{BS} (\mathbf{\Phi}_i^\text{SI}+\mathbf{\Psi}_i)^\mathrm{T} + \mathbf{N}_i+\mathbf{\Delta}_i = \bar{\mathbf{Y}}_i + \mathbf{\Delta}_i, 
\end{eqnarray}
where $\mathbf{\Psi}_i \in \mathbb{C}^{\tau_\text{SI} \times M_t}$ and $\mathbf{\Delta}_i \in \mathbb{C}^{\tau_\text{SI} \times M_t}$ are Tx noise and Rx distortion. We define $(\mathbf{\Phi}_i^\text{SI})^\mathit{T}=[\mathbf{\phi}_{i1}^\text{SI} \ldots \mathbf{\phi}_{i\tau_\text{SI}}^\text{SI}]$, $ (\mathbf{\Psi}_i)^\mathit{T}=[\mathbf{\psi}_{i1} \ldots \mathbf{\psi}_{i\tau_\text{SI}}]$ and $(\mathbf{\Delta}_i)^\mathit{T}=[ \mathbf{\delta}_{i1} \ldots \mathbf{\delta}_{i\tau_\text{SI}}]$, where the $t^{th}$ column of $\mathbf{\Psi}_i$ and $\mathbf{\Delta}_i$ follow $\mathbf{\psi}_{it} \sim \mathcal{CN}\big(0, \alpha \text{diag}(\mathbf{\phi}_{it}^\text{SI} (\mathbf{\phi}_{it}^\text{SI})^H) \big)$ and $\mathbf{\delta}_{it} \sim \mathcal{CN}\big(0, \beta \text{diag}( \bar{\mathbf{y}}_{it} (\bar{\mathbf{y}}_{it}) ^H) \big)$, where
$\bar{\mathbf{y}}_{it}$ is the $t^\text{th}$ column of $\bar{\mathbf{Y}}_i$. We obtain the distribution of the estimated channels as follows: 
\begin{subequations}
\begin{eqnarray}
&&\hspace{-1.5cm}\hat{\mathbf{g}}_{ii,k}^d \sim \mathcal{CN}\left(0, \frac{\tau_{UD}P_u (\rho_{ii,k}^{d})^4}{\tau_{UD}P_u \sum_{j=1}^N (\rho_{ij,k}^{d})^2 + n_o} \mathbf{I}_{M_r}\right), \label{eq: est_nSPT_a}\\
&&\hspace{-1.5cm}\hat{\mathbf{g}}_{ii,k}^u \sim \mathcal{CN}\left(0, \frac{\tau_{UU}P_u (\rho_{ii,k}^{u})^4}{\tau_{UU}P_u \sum_{j=1}^N (\rho_{ij,k}^{u})^2 + n_o} \mathbf{I}_{M_r}\right),  \label{eq: est_nSPT_b}\\
&&\hspace{-1.5cm}\hat{{g}}_{ii,\ell m}^{BS} \sim \mathcal{CN}\left(0, \frac{\tau_{SI} P_d (\rho_{ii,\ell m}^{BS})^4 } { (1+ \beta)\{\tau_{SI} P_d (\rho_{ii,\ell m}^{BS})^2 + \alpha P_d \sum_{m=1}^{M_t}(\rho_{ii,\ell m}^{BS})^2 + n_0 \} } \right). \label{eq: est_nSPT_c}
\end{eqnarray}
\end{subequations}

\subsection{Proposed Scheme: Simultaneous Pilot Transmission (SPT)}
As illustrated in Fig. \ref{fig: pilot} (c), the main concept of the proposed scheme is that  Tx RF chains of both BS and UL users send pilots simultaneously to reduce pilot overhead. The DL users cannot simultaneously transmit pilots with Tx RF chains of BS since such pilots have to go through the channel between Tx RF chains and DL users. Thus, we redefine the pilot as $\mathbf{\Phi}_i^\text{UU} \in \mathbb{C}^{\tau_\text{max} \times K_\text{UL}}$ and $\mathbf{\Phi}_i^\text{SI} \in \mathbb{C}^{\tau_\text{max} \times M_t}$, where $\tau_\text{max}= \max (\tau_\text{SI}, \tau_\text{UU})$. The received BS signal at cell $i$ is 
\begin{eqnarray} \label{eq: Y_i_SPT}
&&\hspace{-2cm}\mathbf{Y}_i^\text{SPT}= \sqrt{\tau_\text{max}P_u} \mathbf{G}_{i,i}^u (\mathbf{\Phi}_i^\text{UU})^{T} + {\sqrt{\tau_\text{max}P_u} \sum_{j=1, j\neq i}^N \mathbf{G}_{i,j}^u (\mathbf{\Phi}_j^\text{UU})^{T}} \nonumber \\ 
&& \hspace{3.5cm}+ \sqrt{\tau_\text{max}P_d} \mathbf{G}_{ii}^\text{BS} (\mathbf{\Phi}_i^\text{SI}+\mathbf{\Psi}_i^\text{SPT})^{T} + \mathbf{N}_i + \mathbf{\Delta}_i^\text{SPT},  
\end{eqnarray}
where $\mathbf{\psi}_{it}^\text{SPT} \sim \mathcal{CN}\big(0, \alpha \text{diag}(\mathbf{\phi}_{it}^\text{SI} (\mathbf{\phi}_{it}^\text{SI})^H) \big)$ and $\mathbf{\delta}_{it}^\text{SPT} \sim \mathcal{CN}\big(0, \beta \text{diag}( \bar{\mathbf{y}}_{it}^\text{SPT} (\bar{\mathbf{y}}_{it}^\text{SPT}) ^H) \big)$.
And, $\bar{\mathbf{y}}_{it}^\text{SPT}$ is the $t^\text{th}$ column of $\bar{\mathbf{Y}}_i^\text{SPT}= \sqrt{\tau_\text{max}P_d} \mathbf{G}_{ii}^\text{BS} (\mathbf{\Phi}_i^\text{SI}+\mathbf{\Psi}_i^\text{SPT})^{T} + \mathbf{N}_i$. In order to estimate both the SI and UL channels based on (\ref{eq: Y_i_SPT}), we estimate the SI channels first because the power of the SI pilots is larger than that of the UL pilots. Then, we subtract the estimated signal from (\ref{eq: Y_i_SPT}). The resulting signal is 
\begin{eqnarray}
&&\hspace{-2.5cm}{\mathbf{Y}}_i^\text{SPT,r}= \sqrt{\tau_\text{max}P_u} \sum_{j=1}^N \mathbf{G}_{i,j}^u (\mathbf{\Phi}_j^\text{UU})^{T} + \sqrt{\tau_\text{max}P_d} (\mathbf{G}_{ii}^\text{BS}-\hat{\mathbf{G}}_{ii}^\text{BS}) (\mathbf{\Phi}_i^\text{SI})^{T} \nonumber \\  
&&\hspace{3.5cm}+\sqrt{\tau_\text{max}P_d} \mathbf{G}_{ii}^\text{BS}(\mathbf{\Psi}_i^\text{SPT})^{T} + \mathbf{N}_i + \mathbf{\Delta}_i^\text{SPT}. 
\end{eqnarray} 
\begin{theorem} \label{theorem: 1}
By means of  SPT, the distribution of the estimated  $m^\text{th}$ and  $k^\text{th}$ columns of $\mathbf{G}^\text{BS}_{i,i}$ and $\mathbf{G}^u_{i,i}$ are given by 
\begin{subequations}
\begin{eqnarray}
&&\hspace{-.7cm}\hat{{g}}_{ii,\ell m}^\text{BS} \sim \mathcal{CN}\left(0,\frac{\tau_\text{max}{P_d} (\rho_{ii,\ell m}^\text{BS})^4}{P_u \sum_{j=1}^N \sum_{k=1}^{K_\text{UL}} (\rho_{ij,k}^{u})^2 + \text{A} } \right), \label{eq: est_SPT_a}\\
&&\hspace{-.7cm}\hat{\mathbf{g}}_{ii,k}^u \sim 
\mathcal{CN}\left(0, \frac{\tau_\text{max}P_u (\rho_{ii,k}^{u})^4}{\tau_\text{max}P_u \sum_{j=1}^N (\rho_{ij,k}^{u})^2 + P_d \sum_{m=1}^{M_t} (\rho_{ii,m}^\text{BS})^2\mathrm{Var}[\tilde{\epsilon}_{ii,m}^\text{BS}] + \text{A}} \mathbf{I}_{M_r}\right),  
\end{eqnarray}
\end{subequations}
where $\text{A}=(1+\beta)\{\tau_\text{max} P_d (\rho_{ii,lm}^\text{BS})^2 +  \alpha P_d\sum_{m=1}^{M_t}(\rho_{ii,m}^\text{BS})^2  + n_o\}$. 
 We omit the distribution of the estimated DL channel because it is the same as those of  the nSPT case. 
\end{theorem}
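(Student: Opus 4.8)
The plan is to obtain both estimates from the standard Gaussian MMSE rule applied coefficient by coefficient. For jointly zero-mean circularly-symmetric Gaussian quantities, the MMSE estimate of a channel coefficient $g$ from an observation $y$ is $\hat g=\mathbf{C}_{gy}\mathbf{C}_{yy}^{-1}y$, the estimate is itself $\mathcal{CN}\big(0,\,\mathbf{C}_{gy}\mathbf{C}_{yy}^{-1}\mathbf{C}_{yg}\big)$, and $\hat g$ is independent of the error $g-\hat g$ with $\mathrm{Var}[g-\hat g]=\mathbf{C}_{gg}-\mathbf{C}_{\hat g\hat g}$. Because all pilots are unit-norm and the two channels are estimated sequentially (SI first, since its received power dominates), I would exploit the pilot structure to \emph{project} the received signal onto the pilot of interest so that the wanted coefficient appears in isolation; the whole theorem then reduces to evaluating one cross-covariance (the numerator) and one post-projection observation variance (the denominator) for each coefficient. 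Throughout, the signal-dependent impairment covariances are replaced by their averages, which keeps the model jointly Gaussian and the estimator linear, as in the impairment model of \cite{day2012full}.

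For the SI coefficient $g_{ii,\ell m}^\text{BS}$ I would right-multiply $\mathbf{Y}_i^\text{SPT}$ by $(\mathbf{\Phi}_i^\text{SI})^*$ and use $(\mathbf{\Phi}_i^\text{SI})^T(\mathbf{\Phi}_i^\text{SI})^*=\mathbf{I}_{M_t}$ to expose the clean term $\sqrt{\tau_\text{max}P_d}\,g_{ii,\ell m}^\text{BS}$; the cross-covariance with $g_{ii,\ell m}^\text{BS}$ is $\sqrt{\tau_\text{max}P_d}(\rho_{ii,\ell m}^\text{BS})^2$, whose square gives the numerator $\tau_\text{max}P_d(\rho_{ii,\ell m}^\text{BS})^4$ of \eqref{eq: est_SPT_a}. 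The denominator is the total variance of the projected observation, which I would accumulate term by term: the self term $\tau_\text{max}P_d(\rho_{ii,\ell m}^\text{BS})^2$; the Tx-noise contribution, which is filtered by the \emph{same} SI channel and yields $\alpha P_d\sum_{m}(\rho_{ii,m}^\text{BS})^2$; the thermal noise $n_0$; and the newly present leakage of the simultaneously transmitted UL pilots $\mathbf{\Phi}_j^\text{UU}$, which is not orthogonal to $\mathbf{\Phi}_i^\text{SI}$ and contributes $P_u\sum_{j}\sum_{k}(\rho_{ij,k}^u)^2$ (the $\tau_\text{max}$ cancelling against the $1/\tau_\text{max}$ normalization of the cross-correlation between non-matching unit pilots). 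Finally, since the Rx distortion $\mathbf{\Delta}_i^\text{SPT}$ has covariance proportional to the power of $\bar{\mathbf{Y}}_i^\text{SPT}$, whose columns contain exactly the SI-signal, Tx-noise and thermal terms but \emph{not} the UL pilots, taking $\mathbb{E}[\beta\,\text{diag}(\bar{\mathbf{y}}_{it}^\text{SPT}(\bar{\mathbf{y}}_{it}^\text{SPT})^H)]$ scales precisely those three terms by $\beta$, producing the common factor $(1+\beta)$ and hence the quantity $\text{A}$.

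For the UL coefficient I would start from the residual $\mathbf{Y}_i^\text{SPT,r}$, in which the SI signal has been replaced by the error term $(\mathbf{G}_{ii}^\text{BS}-\hat{\mathbf{G}}_{ii}^\text{BS})(\mathbf{\Phi}_i^\text{SI})^T$, and right-multiply by $(\mathbf{\Phi}_i^\text{UU})^*$ to isolate $\sqrt{\tau_\text{max}P_u}\,\mathbf{g}_{ii,k}^u$; the numerator is then $\tau_\text{max}P_u(\rho_{ii,k}^u)^4$. Because every cell reuses the same UL pilot set, $(\mathbf{\Phi}_j^\text{UU})^T(\mathbf{\Phi}_i^\text{UU})^*=\mathbf{I}_{K_\text{UL}}$, so the inter-cell UL channels add coherently and give the pilot-contamination sum $\tau_\text{max}P_u\sum_{j}(\rho_{ij,k}^u)^2$; the imperfectly cancelled SI contributes $P_d\sum_{m}(\rho_{ii,m}^\text{BS})^2\mathrm{Var}[\tilde\epsilon_{ii,m}^\text{BS}]$, with $\mathrm{Var}[\tilde\epsilon_{ii,m}^\text{BS}]=\mathbf{C}_{gg}-\mathbf{C}_{\hat g\hat g}$ read off from the previous step; and the remaining Tx-noise, Rx-distortion and thermal terms again collapse into $\text{A}$. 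Dividing numerator by this denominator gives the stated distribution of $\hat{\mathbf{g}}_{ii,k}^u$. The DL coefficient is estimated from the separately transmitted UL pilot of the DL users and sees no SI or UL leakage, so the derivation is identical to the nSPT case \eqref{eq: est_nSPT_a} and can simply be quoted.

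The step I expect to be the main obstacle is handling the two signal-dependent impairments consistently rather than any individual variance. The Tx noise $\mathbf{\Psi}_i^\text{SPT}$ passes through the \emph{same} matrix $\mathbf{G}_{ii}^\text{BS}$ as the SI pilot, so its contribution is correlated with the coefficient being estimated and its diagonal covariance must be propagated through $\mathbf{G}_{ii}^\text{BS}$ before averaging over the fading; and the Rx-distortion covariance is itself a function of $\bar{\mathbf{Y}}_i^\text{SPT}$, so one must expand $\mathbb{E}[\beta\,\text{diag}(\bar{\mathbf{y}}_{it}^\text{SPT}(\bar{\mathbf{y}}_{it}^\text{SPT})^H)]$ and check that only the self-, Tx-noise- and thermal terms (and not the UL leakage) are scaled by $\beta$. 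Keeping these couplings and the pilot-correlation normalizations straight is the delicate part; once every per-term variance is tabulated, assembling each estimate as numerator over denominator is immediate.
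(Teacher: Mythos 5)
Your proposal is correct and follows essentially the same route as the paper's Appendix A: project the received pilot block onto $(\mathbf{\Phi}_i^\text{SI})^*$ (respectively onto the UL pilot after subtracting the estimated SI signal), apply the zero-mean Gaussian MMSE rule $\hat g=\mathbf{C}_{gy}\mathbf{C}_{yy}^{-1}y$ coefficient by coefficient, and tabulate the per-term variances --- UL-pilot leakage, Tx noise filtered through $\mathbf{G}_{ii}^\text{BS}$, the $(1+\beta)$ scaling from the signal-dependent Rx distortion, and the residual-SI term $P_d\sum_m(\rho_{ii,m}^\text{BS})^2\mathrm{Var}[\tilde\epsilon_{ii,m}^\text{BS}]$ in the UL denominator. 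Your identification of the delicate points (propagating the impairment covariances and noting that the Rx-distortion power excludes the UL-pilot component) matches exactly what the paper's derivation relies on.
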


\begin{proof}
See Appendix A. \end{proof}
\begin{remark}[SI channel estimation error of nSPT and SPT]\label{rem: NMSE}
By means of the deriving normalized minimum mean square error (NMSE) which is defined as $|g-\hat{g}|^2 / |g|^2$, we can measure the level of SI cancellation for nSPT and SPT.  The NMSE values of nSPT and SPT from (\ref{eq: est_nSPT_c})  and (\ref{eq: est_SPT_a}), respectively, are given by 

\begin{subequations}
\begin{eqnarray}
&&\hspace{-1.2cm}\xi_{nSPT}^2=\frac{ \beta \tau_{SI} P_d (\rho_{ii,\ell m}^{BS})^2 + (1+\beta)\{\alpha P_d \sum_{m=1}^{M_t}(\rho_{ii,\ell m}^{BS})^2 + n_0 \}   } { (1+ \beta)\{\tau_{SI} P_d (\rho_{ii,\ell m}^{BS})^2 + \alpha P_d \sum_{m=1}^{M_t}(\rho_{ii,\ell m}^{BS})^2 + n_0 \} }, \label{eq: NMSE_nSPT} \\
&&\hspace{-1.2cm}\xi_{SPT}^2 = \frac{P_u \sum_{j=1}^N \sum_{k=1}^{K_\text{UL}} (\rho_{ij,k}^{u})^2 + \beta \tau_\text{max} P_d (\rho_{ii,\ell m}^\text{BS})^2 +     (1+\beta)\{\alpha P_d \sum_{m=1}^{M_t}(\rho_{ii,\ell m}^\text{BS})^2 + n_o\}}{ P_u \sum_{j=1}^N \sum_{k=1}^{K_\text{UL}} (\rho_{ij,k}^{u})^2 + (1+\beta)\{\tau_\text{max} P_d (\rho_{ii,\ell m}^\text{BS})^2 +\alpha P_d \sum_{m=1}^{M_t}(\rho_{ii,\ell m}^\text{BS})^2 + n_o\} }. \label{eq: NMSE_SPT}
\end{eqnarray}
\end{subequations}
Contrary to nSPT, there exists an additional interference term in (\ref{eq: NMSE_SPT}) induced by UL pilot transmission in SPT. 
However, this term can be negligible for the following reasons: i) this term is unrelated to pilot overhead $\tau_{\text{max}}$ which means that it will become relatively small as $M$ increases. ii) the large-scale fading gain of UL users, $(\rho_{ij,k}^u)^2$, is relatively smaller than large-scale fading gain of SI channel, $(\rho_{ii,lm}^\text{BS})^2$. Especially, for cell-boundary users, large-scale fading gain become much smaller. In conclusion, we can as easily obtain the performance of SI cancellation by using SPT as by using nSPT. 
\end{remark}

\section{Analytic model for ergodic achievable sum-rate} \label{sec: sum-rate}
By applying simple MF and ZF linear filters to BS for two multicell system scenarios, we introduce the analytic model for ergodic achievable sum-rate of cell-boundray users.

\subsection{System Scenario 1: Non-Cooperative Multicell System}  \label{subsec: NoCRAN}
\subsubsection{Ergodic Achievable Downlink Sum-Rate}
Based on \cite{lim2015performance}, we adopt matrix-normalization for the MF precoder and vector-normalization for the ZF precoder. In other words, for cell $i$,  $\mathbf{F}_i^\text{MF}=(\hat{\mathbf{G}}_{ii}^d)^{H}/||\hat{\mathbf{G}}_{ii}^d||$ and $\mathbf{F}_i^\text{ZF} = (\hat{\mathbf{G}}_{ii}^d)^{H}(\hat{\mathbf{G}}_{ii}^d (\hat{\mathbf{G}}_{ii}^d)^{H})^{-1}/||\mathbf{F}_v|| = \Big[\frac{\mathbf{f}_{i,1}}{\sqrt{K_\text{DL}}||\mathbf{f}_{i,1}||} \ldots \frac{\mathbf{f}_{i,K_\text{DL}}}{\sqrt{K_\text{DL}}||\mathbf{f}_{i,K_\text{DL}}||} \Big]$.
\begin{theorem} \label{theorem: 2}
From (\ref{eq:y_NoCRAN}), the DL ergodic achievable sum-rate for $K_\text{DL}$ cell-boundary users in cell $i$ is given as follows: for MF precoder, 
\begin{subequations}
\begin{eqnarray} \label{eq:R_d_MF}
&&\hspace{-2cm}R_{d,i}^\text{MF} \approx  \sum_{k=1}^{K_\text{DL}}  \mathcal{C} \Bigg( \frac{P_d (\hat{\rho}_{ii,k}^d)^4 M_t(M_t+1)}{I_{d,i}^\text{MF} + M_t P_d (\hat{\rho}_{ii,k}^d)^2 (\bar{\rho}_{ii,k}^d)^2 + M_t \sum_{k=1}^{K_\text{DL}} (\hat{\rho}_{ii,k}^d)^2 {n_0} }  \Bigg) \\
&&\hspace{-2cm}I_{d,i}^\text{MF} = P_d M_t (\rho_{ii,k}^d)^2\sum_{l=1,l\neq k}^{K_\text{DL}}(\hat{\rho}_{ii,l}^d)^2 + P_d M_t \sum_{k=1}^{K_\text{DL}} (\hat{\rho}_{ii,k}^d)^2 \sum_{j=1,j\neq i}^N (\rho_{ij,k}^d)^2  \nonumber\\ 
&& \hspace{6cm}+ P_u M_t \sum_{k=1}^{K_\text{DL}} (\hat{\rho}_{ii,k}^d)^2 \sum_{j=1}^N \sum_{k=1}^{K_\text{UL}} (\rho_{ij,k}^\text{UE})^2, 
\end{eqnarray}
\end{subequations}
where $\hat{\mathbf{g}}_{ii,k}^d \sim \mathcal{CN}\big(0, (\hat{\rho}_{ii,k}^d)^2 \mathbf{I}_{M_r}\big)$ and ${\mathbf{\epsilon}}_{ii,k}^d \sim \mathcal{CN}\big(0, (\bar{\rho}_{ii,k}^d)^2 \mathbf{I}_{M_r}\big)$. For ZF precoder, 
\begin{eqnarray} \label{eq:R_d_ZF}
R_{d,i}^\text{ZF} \approx \sum_{k=1}^{K_\text{DL}} \mathcal{C} \Bigg( \frac{P_d (\hat{\rho}_{ii,k}^d)^2 \frac{M_t-K_\text{DL}+1}{K_\text{DL}}}{P_d \sum_{j=1, j\neq i}^N (\rho_{ij,k}^d)^2 + P_u \sum_{j=1}^N \sum_{k=1}^{K_\text{UL}} (\rho_{ij,k}^\text{UE})^2 + P_d (\bar{\rho}_{ii,k}^d)^2 + {n_0}} \Bigg ).
\end{eqnarray}
\end{theorem}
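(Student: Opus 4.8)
The plan is to read off the six additive terms of the received signal (\ref{eq:y_NoCRAN})---desired signal, intra-cell, inter-cell and UE-UE interference, estimation error, and noise---and, using the mutual independence and unit power of the data symbols, observe that all cross-products vanish in expectation. I would then define an effective per-user SINR as the ratio of the expected desired-signal power to the sum of the expected powers of the remaining five terms, and approximate the ergodic rate by $\sum_{k}\mathcal{C}(\mathrm{SINR}_k)$. This final step is the standard deterministic-equivalent bound justified by channel hardening in the large-$M_t$ regime: the Frobenius normalizer $||\hat{\mathbf{G}}_{ii}^d||^2$ concentrates on its mean $M_t\sum_{k'}(\hat{\rho}_{ii,k'}^d)^2$, so $\mathbb{E}[X/||\hat{\mathbf{G}}_{ii}^d||^2]\approx\mathbb{E}[X]/\mathbb{E}[||\hat{\mathbf{G}}_{ii}^d||^2]$ and the random normalization can be pulled out of each term identically.

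For the MF precoder $\mathbf{f}_{i,k}=(\hat{\mathbf{g}}_{ii,k}^d)^{*}/||\hat{\mathbf{G}}_{ii}^d||$, the desired term equals $\sqrt{P_d}\,||\hat{\mathbf{g}}_{ii,k}^d||^2/||\hat{\mathbf{G}}_{ii}^d||$, so its expected power requires the fourth moment $\mathbb{E}[||\hat{\mathbf{g}}_{ii,k}^d||^4]=M_t(M_t+1)(\hat{\rho}_{ii,k}^d)^4$ of the complex Gaussian vector. Each interference power I would evaluate with the elementary identity $\mathbb{E}[|\mathbf{a}^{H}\mathbf{b}|^2]=M_t\sigma_a^2\sigma_b^2$ for independent isotropic $\mathbf{a}\sim\mathcal{CN}(0,\sigma_a^2\mathbf{I})$ and $\mathbf{b}\sim\mathcal{CN}(0,\sigma_b^2\mathbf{I})$: the intra-cell term pairs the true channel $\mathbf{g}_{ii,k}^d$ with each $\hat{\mathbf{g}}_{ii,l}^d$, $l\neq k$; the inter-cell and UE-UE terms pair $\mathbf{g}_{ij,k}^d$ or $\mathbf{g}_{ij,k}^\text{UE}$ with cell $j$'s precoder columns; and the estimation-error term uses the MMSE orthogonality $\mathbf{\epsilon}_{ii,k}^d\perp\hat{\mathbf{g}}_{ii,k}^d$. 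Collecting these and multiplying numerator and denominator by $M_t\sum_{k'}(\hat{\rho}_{ii,k'}^d)^2$ to clear the common normalizer then yields exactly (\ref{eq:R_d_MF}).

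For the ZF precoder I would invoke the zero-forcing identity $(\hat{\mathbf{g}}_{ii,k}^d)^T\mathbf{f}_{i,l}=\delta_{kl}/(\sqrt{K_\text{DL}}\,||\tilde{\mathbf{f}}_{i,k}||)$, where $\tilde{\mathbf{f}}_{i,k}$ is the unnormalized pseudo-inverse column. This annihilates the intra-cell interference on the estimate, so the only residual cell-$i$ self-interference is the leakage $(\mathbf{\epsilon}_{ii,k}^d)^T\mathbf{f}_{i,l}$ through the estimation error; summed over $l$ with $||\mathbf{f}_{i,l}||^2=1/K_\text{DL}$ and using $\mathbf{\epsilon}_{ii,k}^d\perp\mathbf{f}_{i,l}$, it combines with the explicit error term into the single contribution $P_d(\bar{\rho}_{ii,k}^d)^2$. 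The desired power $\tfrac{1}{K_\text{DL}}\mathbb{E}\big[1/[(\hat{\mathbf{G}}_{ii}^d(\hat{\mathbf{G}}_{ii}^d)^{H})^{-1}]_{kk}\big]$ I would evaluate through the Wishart identity $\mathbb{E}\big[1/[(\hat{\mathbf{G}}_{ii}^d(\hat{\mathbf{G}}_{ii}^d)^{H})^{-1}]_{kk}\big]=(M_t-K_\text{DL}+1)(\hat{\rho}_{ii,k}^d)^2$, giving the gain $\tfrac{M_t-K_\text{DL}+1}{K_\text{DL}}(\hat{\rho}_{ii,k}^d)^2$; the inter-cell, UE-UE, and noise powers follow as in the MF case but now without the $||\hat{\mathbf{G}}_{ii}^d||^2$ factor, since the per-column vector normalization fixes $||\mathbf{f}_{i,l}||^2=1/K_\text{DL}$. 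These assemble into (\ref{eq:R_d_ZF}).

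I expect the main obstacle to be the two places where the randomness is not simply additive: justifying the ratio-of-expectations (channel-hardening) step that legitimizes replacing the ergodic rate by $\mathcal{C}$ of the effective SINR, and computing the inverse-Wishart expectation for the ZF signal gain. By contrast, once independence and the MMSE decomposition $\mathbf{g}_{ii,k}^d=\hat{\mathbf{g}}_{ii,k}^d+\mathbf{\epsilon}_{ii,k}^d$ are in hand, the remaining interference evaluations reduce to routine second-order moment computations.
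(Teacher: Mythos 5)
Your proposal is correct and follows essentially the same route as the paper's Appendix B: approximate the ergodic rate by $\mathcal{C}$ of the expected SINR, clear the MF normalizer $||\hat{\mathbf{G}}_{ii}^d||^2$ via a ratio-of-expectations (channel-hardening) step, evaluate the desired power with the fourth moment $M_t(M_t+1)(\hat{\rho}_{ii,k}^d)^4$ and the interference with second-order cross moments, and for ZF use the inverse-Wishart result $\mathbb{E}[1/||\mathbf{f}_{i,k}||^2]=(\hat{\rho}_{ii,k}^d)^2(M_t-K_\text{DL}+1)$ together with the residual estimation-error leakage summing to $P_d(\bar{\rho}_{ii,k}^d)^2$. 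Your accounting of that ZF leakage term is in fact slightly more explicit than the paper's, but the substance is identical.
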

\begin{proof}
See Appendix B. \end{proof}
\subsubsection{Ergodic Achievable Uplink Sum-Rate}
We define the MF and ZF detection filter as $\mathbf{W}_{i}^\text{MF}=(\hat{\mathbf{G}}_{i,i}^u)^{H}$ and $\mathbf{W}_i^\text{ZF}=(\hat{\mathbf{G}}_{ii}^u)^{H}(\hat{\mathbf{G}}_{ii}^u(\hat{\mathbf{G}}_{ii}^u)^{H})^{-1}$. 
\begin{theorem} \label{theorem: 3}
From (\ref{eq:y_NoCRAN_ul}), the UL ergodic achievable sum-rate for $K_\text{UL}$ cell-boundary users in cell $i$ is given as follows: For MF detection filter, 
\begin{subequations}
\begin{eqnarray} \label{eq:R_u_MF}
&&\hspace{-1cm}R_{u,i}^\text{MF} \approx \sum_{k=1}^{K_\text{UL}} \mathcal{C} \Bigg( \frac{P_u M_r (\hat{\rho}_{ii,k}^u)^2}{P_u \sum_{\ell=1,\ell \neq k}^{K_\text{UL}} (\rho_{ii,\ell}^u)^2 + P_u \sum_{j=1, j\neq i}^N \sum_{k=1}^{K_\text{UL}} (\rho_{ij,k}^u)^2+ I_{u,i}^\text{MF} + P_u (\bar{\rho}_{ii,k}^u)^2 + {n_0} } \Bigg), \\
&&\hspace{-1cm}I_{u,i}^\text{MF} = \frac{P_d}{M_t} \sum_{j=1, j\neq i}^N \sum_{\ell=1}^{M_r}  (\rho_{ij,m}^\text{BS})^2 + \frac{P_d}{M_t M_r} \sum_{\ell=1}^{M_r} \sum_{m=1}^{M_t} (\bar{\rho}_{ii,\ell m}^\text{BS})^2+  \alpha \frac{P_d}{M_t M_r}\sum_{\ell=1}^{M_r}\sum_{m=1}^{M_t} (\rho_{ii,\ell m}^\text{BS})^2 +  \nonumber \\ 
&&\hspace{6.5cm}\beta \left\{ (1+\alpha)\frac{P_d}{M_t M_r}\sum_{\ell=1}^{M_r}\sum_{m=1}^{M_t} (\rho_{ii,\ell m}^\text{BS})^2  +{n_0}\right\}, 
\end{eqnarray} 
\end{subequations}
where $\hat{\mathbf{g}}_{ii,k}^u \sim \mathcal{CN}\big(0, (\hat{\rho}_{ii,k}^u)^2 \mathbf{I}_{M_r}\big)$ and ${\mathbf{\epsilon}}_{ii,k}^u \sim \mathcal{CN}\big(0, (\bar{\rho}_{ii,k}^u)^2 \mathbf{I}_{M_r}\big)$. For ZF detection filter, 
\begin{eqnarray} \label{eq:R_u_ZF}
&&\hspace{-0.7cm}R_{u,i}^\text{ZF} \approx 
\sum_{k=1}^{K_\text{UL}} \mathcal{C} \Bigg(\frac{P_u}{\frac{1}{(\hat{\rho}_{ii,k}^u)^2(M_r-K_{UL}+1)} I_{u,i}^\text{ZF} + \frac{P_u}{(\hat{\rho}_{ii,k}^u)^2(M_r-K_{UL}+1)}\sum_{\ell=1}^{K_{UL}}(\bar{\rho}_{ii,\ell}^u)^2 + \frac{{n_0}}{(\hat{\rho}_{ii,k}^u)^2(M_r-K_{UL}+1)}} \Bigg), \\
&&\hspace{-0.7cm}I_{u,i}^\text{ZF} =   P_u \sum_{j=1, j\neq i}^N\sum_{k=1}^{K_\text{UL}}(\rho_{ij,k}^u)^2   + \frac{P_d}{M_t} \sum_{j=1, j\neq i}^N \sum_{m=1}^{M_t}  (\rho_{ij,m}^\text{BS})^2 + \frac{P_d}{M_t M_r} \sum_{\ell=1}^{M_r}\sum_{m=1}^{M_t}  (\bar{\rho}_{ii,lm}^\text{BS})^2+ \nonumber\\ 
&&\hspace{3.5cm}\alpha \frac{P_d}{M_t M_r} \sum_{\ell=1}^{M_r} \sum_{m=1}^{M_t} (\rho_{ii,\ell m}^\text{BS})^2 + \beta\left\{ (1+\alpha) \frac{P_d}{M_t M_r} \sum_{\ell=1}^{M_r}\sum_{m=1}^{M_t} (\rho_{ii,\ell m}^\text{BS})^2 +  {n_0}\right\}, \nonumber
\end{eqnarray}
where ${\epsilon}_{ii,\ell m}^\text{BS} \sim \mathcal{CN} (0, (\bar{\rho}_{ii,\ell m}^\text{BS})^2 )$.
\end{theorem}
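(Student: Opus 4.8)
The plan is to treat (\ref{eq:y_NoCRAN_ul}) as a per-user deterministic-equivalent SINR problem, following the conditional-expectation (``channel-hardening'') technique used for the half-duplex results in \cite{lim2015performance, Ngo13TCOM}. First I would condition on the estimated UL channel $\hat{\mathbf{G}}_{ii}^u$, which fixes the detection filter $\mathbf{w}_{i,k}$, and regard the fluctuation of the desired term together with every other summand in (\ref{eq:y_NoCRAN_ul}) as mutually uncorrelated effective noise. The worst-case uncorrelated-additive-noise bound then yields, for each user $k$, an achievable rate of the form $\mathcal{C}\big(\text{signal power}/\text{effective-noise power}\big)$, and summing over $k$ gives $R_{u,i}$. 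The entire task thus reduces to computing one signal second moment and a collection of interference-plus-noise second moments.

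For the MF filter $\mathbf{w}_{i,k}=(\hat{\mathbf{g}}_{ii,k}^u)^{*}$ the conditional desired power is $P_u\|\hat{\mathbf{g}}_{ii,k}^u\|^4$. The main body of the argument is to evaluate the conditional second moment of each remaining term. Using that distinct users inside a cell carry independent channels, that the MMSE estimate $\hat{\mathbf{g}}_{ii,k}^u$ is uncorrelated with its own error $\mathbf{\epsilon}_{ii,k}^u$, and that $\mathbf{n}_{ui,k}$, $\mathbf{\psi}_i$, $\mathbf{\delta}_i$ are independent of $\hat{\mathbf{G}}_{ii}^u$, I would show that every interference-plus-noise term has conditional variance proportional to $\|\hat{\mathbf{g}}_{ii,k}^u\|^2$: the intra- and inter-cell terms contribute $P_u\|\hat{\mathbf{g}}_{ii,k}^u\|^2(\rho^u)^2$ factors (the inter-cell ones assembled from their large-scale gains $\sum_{j\neq i}\sum_k(\rho_{ij,k}^u)^2$), the estimation-error and thermal-noise terms contribute $P_u(\bar{\rho}_{ii,k}^u)^2\|\hat{\mathbf{g}}_{ii,k}^u\|^2$ and $n_0\|\hat{\mathbf{g}}_{ii,k}^u\|^2$, and the four self-interference-related terms collapse into $I_{u,i}^\text{MF}$. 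Since the common factor $\|\hat{\mathbf{g}}_{ii,k}^u\|^2$ cancels one power from the numerator, the conditional SINR reduces to $P_u\|\hat{\mathbf{g}}_{ii,k}^u\|^2/(\cdots)$; invoking the law of large numbers $\|\hat{\mathbf{g}}_{ii,k}^u\|^2/M_r\to(\hat{\rho}_{ii,k}^u)^2$ for large $M_r$ produces the numerator $P_uM_r(\hat{\rho}_{ii,k}^u)^2$ and hence (\ref{eq:R_u_MF}).

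The ZF case differs only in the desired/intra-cell bookkeeping. With $\mathbf{W}_i^\text{ZF}=(\hat{\mathbf{G}}_{ii}^u)^H(\hat{\mathbf{G}}_{ii}^u(\hat{\mathbf{G}}_{ii}^u)^H)^{-1}$, the zero-forcing constraint makes the desired coefficient exactly $\sqrt{P_u}$ and annihilates intra-cell interference routed through the estimated channels, so only the error component $\mathbf{\epsilon}_{ii,\ell}^u$ survives and produces the $\sum_\ell(\bar{\rho}_{ii,\ell}^u)^2$ term. The filter norm $\|\mathbf{w}_{i,k}\|^2=[(\hat{\mathbf{G}}_{ii}^u(\hat{\mathbf{G}}_{ii}^u)^H)^{-1}]_{kk}$ scales every effective-noise term, and I would replace it by its expectation $\{(\hat{\rho}_{ii,k}^u)^2(M_r-K_\text{UL}+1)\}^{-1}$ via the standard inverse-Wishart moment; this supplies the common prefactor in (\ref{eq:R_u_ZF}), with the inter-cell, self-interference and hardware contributions gathered into $I_{u,i}^\text{ZF}$ exactly as in the MF case.

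The step I expect to be the genuine obstacle is the exact evaluation of the four self-interference-related terms---residual SI, BS--BS interference, Tx noise and Rx distortion---that make up $I_{u,i}^\text{MF}$ and $I_{u,i}^\text{ZF}$. Each is a quadratic form in $\mathbf{w}_{i,k}$ built from a product of several statistically coupled objects (the SI/BS channel, the precoder $\mathbf{F}$ with its normalization $\|\cdot\|$, and the impairment vectors), and the Rx-distortion covariance $\beta\,\mathrm{diag}(\mathbf{y}_i^\text{SI}(\mathbf{y}_i^\text{SI})^H)$ is itself a function of the transmitted SI signal, so obtaining its marginal variance requires a nested expectation over channel, data and Tx noise. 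Carrying these through while correctly tracking the precoder-normalization powers $1/M_t$ and $1/(M_tM_r)$ is precisely where the $\alpha$- and $\beta$-weighted terms and the residual-SI term $\sum_{\ell,m}(\bar{\rho}_{ii,\ell m}^\text{BS})^2$ emerge, and is the only part of the derivation that goes beyond routine second-moment algebra.
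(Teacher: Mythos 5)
Your proposal takes essentially the same route as the paper's Appendix C: both reduce the theorem to computing second moments of each term in (\ref{eq:y_NoCRAN_ul}), normalize the MF SINR by $\|\hat{\mathbf{g}}_{ii,k}^u\|^2$ so that the desired power becomes $P_u M_r (\hat{\rho}_{ii,k}^u)^2$ while every interference term contributes its large-scale variance, and use the inverse-Wishart moment $\mathbb{E}\big[\|\mathbf{w}_{i,k}\|^2\big]=1/\{(\hat{\rho}_{ii,k}^u)^2(M_r-K_\text{UL}+1)\}$ for the ZF filter. The only (cosmetic) difference is the justification for passing to a deterministic equivalent --- you invoke the worst-case uncorrelated-additive-noise bound plus the law of large numbers, whereas the paper uses $\mathbb{E}[\log_2(1+\mathrm{SINR})]\approx \log_2(1+\mathbb{E}[\mathrm{SINR}])$ together with the ratio-of-expectations approximation (Lemma 4 of \cite{lim2015performance}) --- and your proposal leaves the self-interference, Tx-noise and Rx-distortion second moments at the same level of detail as the paper itself does.
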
 
\begin{proof}
See Appendix C. \end{proof}
\subsection{System Scenario 2: Cooperative Multicell System }
Considering a full-centralized CRAN system, a precoder and detection filter are produced based on the global CSI of the system at CU. Unlike the case of the non-cooperative multicell system, there exists  DL and UL quantization noise occurring at front-haul due to limited front-haul capacity. 
\subsubsection{Ergodic Achievable Downlink Sum-Rate}
We define the MF precoder as  $\mathbf{F}^\text{MF} = (\hat{\mathbf{G}}^d)^H/||\hat{\mathbf{G}}^d||$ and the ZF precoder as $\mathbf{F}^\text{ZF}=(\hat{\mathbf{G}}^d)^{H}(\hat{\mathbf{G}}^d(\hat{\mathbf{G}}^d)^{H})^{-1}/||\mathbf{F}_v||$.  
\begin{theorem} \label{theorem: 4}
From (\ref{eq:y_CRAN}), the DL ergodic achievable sum-rate for $K_\text{DL}$ cell-boundary users in cell $i$ is given as follows: For MF precoder, 
\begin{subequations}
\begin{eqnarray} \label{eq:R_bar_d_MF}
&&\hspace{-1cm}{R}_{d,i}^\text{C,MF}  \approx \\
&&\sum_{k=1}^{K_\text{DL}} \mathcal{C} \left( \frac{P_s\{\sum_{j=1}^N M_t(M_t+1)(\hat{\rho}_{ij,k}^d)^4 + \sum_{(n,m)\in \Omega_1} M_t^2(\hat{\rho}_{in,k}^d)^2(\hat{\rho}_{im,k}^d)^2\}}{\mathrm{I_{d,i}^\text{C,MF}} + {I_{q,d,i}^\text{C,MF}} + P_s \sum_{j=1}^N M_t (\bar{\rho}_{ij,k}^d)^2(\hat{\rho}_{ij,k}^d)^2 + \sum_{(i,j)=(1,1)}^{(N,N)} \sum_{k=1}^{K_\text{DL}} M_t(\hat{\rho}_{ij,k}^d)^2 {n_0}  } \right) \nonumber\\
&&\hspace{-1cm}{I_{d,i}^\text{C,MF}}=P_s \sum_{(n,j)\in \Omega_2} \sum_{M=1}^N M_t(\rho_{im,k}^d)^2(\hat{\rho}_{nm,j}^d)^2+ P_u \left\{\sum_{j=1}^N \sum_{k=1}^{K_\text{UL}}(\rho_{ij,k}^\text{UE})^2\right\}\sum_{(i,j)=(1,1)}^{(N,N)} \sum_{k=1}^{K_\text{DL}} M_t(\hat{\rho}_{ij,k}^d)^2 \label{eq:I_d_FD}  \\ 
&&\hspace{-1cm}{I_{q,d,i}^\text{C,MF}}= 
\sum_{j=1}^N \left(M_t (\sigma_j^{d,\text{MF}})^2 (\rho_{ij,k}^d)^2 \right), \\ 
&&\hspace{-1cm}(\sigma_j^{d,\text{MF}})^2=\frac{P_s \sum_{n=1}^N \big \{ \sum_{k=1}^{K_\text{DL}}M_t(\rho_{in,k}^d)^2 \big \} /\sum_{(i,j)=(1,1)}^{(N,N)} \sum_{k=1}^{K_\text{DL}} M_t(\hat{\rho}_{ij,k}^d)^2}{2^{C_d}-1},  \label{eq:I_q}
\end{eqnarray}
\end{subequations}
where, $\Omega_1= \big \{ (n,m)|n=1 \ldots N, m=1 \ldots N, n \neq m\big\}$ and $\Omega_2= \big \{ (n,j)|n=1 \ldots N, j=1 \ldots K_\text{DL}, (n,j) \neq (i,k) \big \}$. For ZF precoder, 
\begin{eqnarray} \label{eq:R_bar_d_ZF}
{R}_{d,i}^\text{C,ZF} \approx \sum_{k=1}^{K_\text{DL}} \mathcal{C} \left( \frac{P_s (\hat{\rho}_{ik,\text{avg}}^d)^2 (M_tN-K_\text{DL}N+1)/(K_\text{DL}N)}{P_u \sum_{j=1}^N \sum_{k=1}^{K_\text{UL}}(\rho_{ij,k}^\text{UE})^2 + \sum_{j=1}^N \big\{M_t (\sigma_j^{d,\text{ZF}})^2 (\rho_{ij,k}^d)^2 \big\} + P_s (\bar{\rho}_{ik,\text{avg}}^d)^2+{n_0}} \right),
\end{eqnarray}
where $(\sigma_j^{d,\text{ZF}})^2 = P_s / N, (\hat{\rho}_{ik,\text{avg}}^d)^2= \mathbb{E} \big[ (\hat{\rho}_{in,k}^d)^2\big]$ and $(\bar{\rho}_{ik,\text{avg}}^d)^2= \mathbb{E}\big[(\bar{\rho}_{in,k}^d)^2\big]$.
\end{theorem}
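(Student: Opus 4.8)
The plan is to compute, for each cell-boundary user $k$ in cell $i$, the ratio of the average desired-signal power to the average power of all remaining terms in (\ref{eq:y_CRAN}), and then invoke the ergodic-rate approximation $\mathbb{E}[\mathcal{C}(\gamma)] \approx \mathcal{C}\big(\mathbb{E}[|S|^2]/\mathbb{E}[|I+N|^2]\big)$ used in \cite{lim2015performance}. The work splits into four pieces: the desired-signal power, the precoder normalization constant, the per-term interference and noise powers, and the self-consistent quantization-noise variance. I treat the MF and ZF precoders separately, using the distributions of $\hat{\mathbf{g}}_{i,k}^d$ and $\mathbf{\epsilon}_{i,k}^d$ established in Sec.~\ref{sec: channel estimation}.

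For the MF precoder I would substitute $\mathbf{f}_{i,k}^{\text{C}} = (\hat{\mathbf{g}}_{i,k}^d)^*/\|\hat{\mathbf{G}}^d\|$ into the desired term, so that the effective channel becomes $\|\hat{\mathbf{g}}_{i,k}^d\|^2/\|\hat{\mathbf{G}}^d\|$. Because $\hat{\mathbf{g}}_{i,k}^d$ is a block vector whose $n$th block is $\mathcal{CN}(0,(\hat{\rho}_{in,k}^d)^2\mathbf{I}_{M_t})$, I would evaluate $\mathbb{E}[\|\hat{\mathbf{g}}_{i,k}^d\|^4]$ using the complex-Gaussian moments $\mathbb{E}[\|\mathbf{x}\|^2]=M_t\sigma^2$ and $\mathbb{E}[\|\mathbf{x}\|^4]=M_t(M_t+1)\sigma^4$ per block together with independence across blocks. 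This yields precisely the diagonal $M_t(M_t+1)(\hat{\rho}_{ij,k}^d)^4$ terms and the off-diagonal $M_t^2(\hat{\rho}_{in,k}^d)^2(\hat{\rho}_{im,k}^d)^2$ cross terms over $\Omega_1$ in the numerator of (\ref{eq:R_bar_d_MF}).

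Next I would treat the normalization $\|\hat{\mathbf{G}}^d\|^2 = \sum_{(i,k)}\|\hat{\mathbf{g}}_{i,k}^d\|^2 \to \sum M_t(\hat{\rho}_{ij,k}^d)^2$ as deterministic by the law of large numbers as $M_t$ grows; this common factor is what lets me clear the normalization from the precoded terms while the un-precoded terms (UE--UE interference and thermal noise) retain it, explaining the structure of $I_{d,i}^{\text{C,MF}}$. The interference and estimation-error terms that pass through the precoder are computed as variances of inner products between an independent channel $\mathbf{g}_{i,k}^d$ (or its error $\mathbf{\epsilon}_{i,k}^d$) and the precoding columns; using independence and the same Gaussian moments produces the $\Omega_2$ sum and the $\bar{\rho}^d$ estimation-error term. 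The UE--UE contribution is immediate from $\mathbf{s}^d(\mathbf{s}^d)^H=\mathbf{I}$ and the independence of the user--user channel.

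The step I expect to be the main obstacle is the quantization-noise variance, because $(\sigma_j^d)^2 = P_s\,\mathbb{E}\|\mathbf{x}_j\|^2/(2^{C_d}-1)$ depends on the precoder output power $\mathbb{E}\|\mathbf{x}_j\|^2 = \mathrm{tr}\big(\mathbf{F}_j^{\text{C}}(\mathbf{F}_j^{\text{C}})^H\big)$, which must itself be evaluated for the normalized MF precoder restricted to the antennas of $\text{BS}_j$; carrying the $\|\hat{\mathbf{G}}^d\|^2$ normalization through this trace is exactly what yields (\ref{eq:I_q}) for $(\sigma_j^{d,\text{MF}})^2$, after which $({\mathbf{g}}_{i,k}^d)^T\mathbf{q}^d$ contributes $\sum_j M_t(\sigma_j^{d,\text{MF}})^2(\rho_{ij,k}^d)^2$. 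For the ZF precoder the calculation is structurally simpler: the zero-forcing property $(\hat{\mathbf{g}}_{n,j}^d)^T\mathbf{f}_{i,k}^{\text{C}}=\delta_{(n,j),(i,k)}$ (Kronecker delta) annihilates all intra- and inter-cell interference built on the estimated channels, leaving only the UE--UE interference, the residual from the estimation error $\bar{\rho}^d$, the quantization noise, and the thermal noise; the array gain $(M_tN-K_{\text{DL}}N+1)/(K_{\text{DL}}N)$ follows from the expectation of the inverse-Wishart quadratic form $\mathbb{E}[(\hat{\mathbf{G}}^d(\hat{\mathbf{G}}^d)^H)^{-1}]$ with $M_tN$ degrees of freedom and $K_{\text{DL}}N$ users, the averaged $(\hat{\rho}_{ik,\text{avg}}^d)^2$ absorbing the variation of the estimate quality across the $N$ blocks, and the simplified $(\sigma_j^{d,\text{ZF}})^2=P_s/N$ following from the per-cell power split under vector normalization. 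Finally I would collect every piece into the single SINR fraction, apply $\mathcal{C}(\cdot)$, and sum over $k$.
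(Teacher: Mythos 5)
Your proposal is correct and follows essentially the same route as the paper's Appendix D: the ratio-of-expectations ergodic approximation, the block-wise expansion of $\mathbb{E}[\|\hat{\mathbf{g}}_{i,k}^d\|^4]$ into the $M_t(M_t+1)$ diagonal and $M_t^2$ cross terms over $\Omega_1$, trace identities for the interference and estimation-error terms, the normalization carried through $\mathbb{E}\|\mathbf{x}_j\|^2$ for the quantization noise, and the inverse-Wishart/average-variance treatment for ZF. No substantive differences to report.
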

\begin{proof}
See Appendix D. \end{proof}

\subsubsection{Ergodic Achievable Uplink Sum-Rate}
We define the MF and ZF detection filter as $\mathbf{W}^\text{MF}=(\hat{\mathbf{G}}^u)^H$ and $\mathbf{W}^\text{ZF}= (\hat{\mathbf{G}}^u)^H (\hat{\mathbf{G}}^u (\hat{\mathbf{G}}^u)^H)^{-1}$. 
\begin{theorem} \label{theorem: 5}
From (\ref{eq:y_CRAN_ul}), the UL ergodic achievable sum-rate of $K_\text{UL}$ cell-boundary users for cell $i$ is given as follows: For MF detection, 
\begin{subequations}
\begin{eqnarray} \label{eq:R_bar_u_MF}
&&\hspace{-0.9cm}{R}_{u,i}^\text{C,MF} \approx \\ 
&&\hspace{-0.5cm}\sum_{k=1}^{K_\text{UL}} \mathcal{C} \left( \frac{P_u \sum_{n=1}^N M_r(\hat{\rho}_{in,k}^u)^2}{P_u \frac{\sum_{(n,j)=(1,1)}^{(N,K_\text{UL})} \left \{ \sum_{m=1}^N (\hat{\rho}_{im,k}^u)^2 (\rho_{nm,j}^u)^2 M_r \right \} }{\sum_{n=1}^N (\hat{\rho}_{in,k}^u)^2 M_r} + {I_{u,i}^\text{C,MF}} + {I_{q,u,i}^\text{C,MF}} + P_u\frac{\sum_{m=1}^N (\hat{\rho}_{im,k}^u)^2 (\bar{\rho}_{im,k}^u)^2}{ \sum_{n=1}^N (\hat{\rho}_{in,k}^u)^2} +{n_0}} \right) \nonumber
\end{eqnarray}
\begin{eqnarray}
&&\hspace{-.6cm}{I_{u,i}^\text{C,MF}}=
\frac{P_s}{A}\sum_{j=1}^N \left\{\left( M_t M_r\sum_{n=1, n\neq j}^N (\hat{\rho}_{in,k}^u)^2 (\rho_{jn}^\text{BS})^2 + \sum_{\ell=1}^{M_r} \sum_{m=1}^{M_t}(\hat{\rho}_{ij,k}^u)^2 (\bar{\rho}_{jj,\ell m}^\text{BS})^2 \right ) 
\left( \sum_{n=1}^N\sum_{k=1}^{K_\text{DL}}(\hat{\rho}_{nj,k}^d)^2 \right ) \right \}   \nonumber  \\
&&\hspace{2cm}+(\beta + \alpha \beta) \frac{P_s}{A}  \sum_{j=1}^N\left\{\left(\sum_{\ell=1}^{M_r} \sum_{m=1}^{M_t}(\hat{\rho}_{ij,k}^u)^2 ({\rho}_{jj,\ell m}^\text{BS})^2\right) \left(\sum_{n=1}^N \sum_{k=1}^{K_\text{DL}}(\hat{\rho}_{nj,k}^d)^2  \right) \right\}  + \beta,   
\end{eqnarray}
\begin{eqnarray}
&&\hspace{-.3cm}{I_{q,u,i}^\text{C,MF}}=\sum_{m=1}^{N} (\sigma_{m}^{u,\text{MF}})^2 (\hat{\rho}_{im,k}^u)^2 / \sum_{n=1}^N (\hat{\rho}_{in,k}^u)^2,  
\end{eqnarray}
\begin{eqnarray}
&&\hspace{-.7cm}(\sigma_m^{u,\text{MF}})^2= \frac{1}{2^{C_u}-1} \Bigg\{ \frac{P_s }{\text{E}||\hat{\mathbf{H}}^d||^2} \Big\{ M_tM_r\sum_{n=1, n\neq i}^N \big ( \sum_{j=1}^N\sum_{k=1}^{K_\text{DL}}(\hat{\rho}_{jn,k}^d)^2\big )  (\rho_{in}^\text{BS})^2  +(\hat{\rho}_{ji,k}^d)^2 \sum_{\ell=1}^{M_r} \sum_{m=1}^{M_t} (\rho_{ii,\ell m}^\text{BS})^2  \nonumber\\
&& \hspace{1.3cm} +(\beta+\alpha \beta) (\rho_{ii}^\text{BS})^2 \sum_{j=1}^N \sum_{k=1}^{K_\text{DL}} (\hat{\rho}_{ji,k}^d)^2    \Big\}  
+P_u \big \{ \sum_{j=1}^N \sum_{k=1}^{K_\text{UL}}(\rho_{ij,k}^u)^2 \big \} M_r +(1+\beta) M_r{n_0} \Bigg\}\label{eq:q_d_CRAN} \label{eq: sigma_ul}, 
\end{eqnarray}
\end{subequations}
where $A=(\sum_{n=1}^N (\hat{\rho}_{in,k}^u)^2 M_r)\text{E}||\hat{\mathbf{H}}^d||^2$, and $\text{E}||\hat{\mathbf{H}}^d||^2= \sum_{n=1}^N \sum_{j=1}^N \sum_{k=1}^{K_\text{DL}} (\hat{\rho}_{jn,k}^d)^2 M_t$. 
For ZF detection, 
\begin{eqnarray} \label{eq:R_bar_u_ZF}
{R}_{u,i}^\text{C,ZF} \approx\hspace{-0.15cm} \sum_{k=1}^{K_\text{UL}} \mathcal{C} \left( \frac{P_u (\hat{\rho}_{ik,\text{avg}}^u)^2(M_rN-K_{UL}N+1)}{P_s\frac{g_\text{avg}}{M_tN} + P_u \sum_{(n,j)=(1,1)}^{N,K_\text{UL}}(\bar{\rho}_{nj,\text{avg}}^u)^2 + (\sigma_\text{avg}^{u, ZF})^2 + ( \beta + \alpha \beta) \frac{g_\text{avg}^\text{diag}}{M_tN} + (\beta +1){n_0} } \right )\hspace{-0.15cm},
\end{eqnarray}\\
where $ (\bar{\rho}_{ik,\text{avg}}^u)^2=\mathbb{E}\big[(\bar{\rho}_{in,k}^u)^2 \big]$,   $(\hat{\rho}_{ik,\text{avg}}^u)^2=\mathbb{E} \big[(\hat{\rho}_{in,k}^u)^2\big]$. Also, we define $g_{avg}=\mathbb{E}\big[g_n \big]$, where $g_i = M_t\sum_{j=1, j\neq i}^N  (\rho_{ji}^\text{BS})^2 + \sum_{m=1}^{M_t} (\bar{\rho}_{ii,lm}^\text{BS})^2$. In (\ref{eq:R_bar_u_ZF}),  we define $g_\text{avg}^\text{diag} = \mathbb{E}[g_1^\text{diag} \ldots g_N^\text{diag}]$, where $g_i^\text{diag}=\sum_{\ell=1}^{M_r}\sum_{m=1}^{M_t} (\rho_{ii,\ell m}^\text{BS})^2$. The average quantization noise is defined as $(\sigma_\text{qvg}^{u,ZF})^2 = \mathbb{E}\big[(\sigma_n^{u,ZF})^2\big]$, where $q_i^u = \frac{\mathbb{E}||\mathbf{y}_i^u||^2}{2^{C_u}-1}$  based on (\ref{eq: y_ul_i}),
\begin{eqnarray}
&&\mathbb{E}||\mathbf{y}_i^u||^2= P_u M_r \sum_{j=1}^N \sum_{k=1}^{K_\text{UL}} (\rho_{ij,k}^u)^2  + \frac{P_s}{M_t N} \left(M_t M_r \sum_{j=1}^N (\rho_{ij}^\text{BS})^2 + \sum_{\ell=1}^{M_r}\sum_{m=1}^{M_t} (\rho_{ii,\ell m}^\text{BS})^2 \right) +  \\&&\hspace{5cm}\frac{\alpha P_s }{M_t N} \sum_{\ell=1}^{M_r}\sum_{m=1}^{M_t} (\rho_{ii,\ell m}^\text{BS})^2 +
\beta(1+\alpha) \frac{M_r}{M_t N} g_\text{avg}^\text{diag} + (1+\beta) M_r {n_0}. \nonumber
\end{eqnarray}
\end{theorem}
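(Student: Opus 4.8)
The plan is to bound each cell-boundary user's rate by the standard decomposition used for linear receivers in large-scale MIMO: the detected symbol is written as a deterministic effective gain carrying the desired symbol plus a collection of mutually uncorrelated residual terms, and treating the aggregate of those residuals as worst-case Gaussian noise yields the $\mathcal{C}(\cdot)$ expression with the desired-signal power in the numerator and the sum of residual variances in the denominator. First I would substitute the MF filter, for which $(\mathbf{w}_{i,k}^\text{C})^T=(\hat{\mathbf{g}}_{i,k}^u)^H$, into (\ref{eq:y_CRAN_ul}) and isolate the desired term $\sqrt{P_u}\,\|\hat{\mathbf{g}}_{i,k}^u\|^2\, s_{i,k}^u$. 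Using the MMSE estimate distribution of Theorem \ref{theorem: 1} together with the orthogonality between the estimate $\hat{\mathbf{g}}_{i,k}^u$ and its error $\mathbf{\epsilon}_{i,k}^u$, the signal power and the beamforming gain follow from the second- and fourth-order moments of circularly symmetric Gaussian vectors, $\mathbb{E}\|\mathbf{g}\|^2=M_r(\hat{\rho})^2$ and $\mathbb{E}\|\mathbf{g}\|^4=M_r(M_r+1)(\hat{\rho})^4$; because the cooperative channel is stacked over the $N$ cooperating cells, these collapse into the $\sum_{n=1}^N M_r(\hat{\rho}_{in,k}^u)^2$ structure appearing in the numerator of (\ref{eq:R_bar_u_MF}).

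Next I would compute the variance of each residual group separately, exploiting the mutual independence of distinct channel blocks and of the hardware-impairment vectors. The intra/inter-cell interference contributes through cross quadratic forms $\mathbb{E}|(\mathbf{w}_{i,k}^\text{C})^T\mathbf{g}_{n,j}^u|^2$, each evaluating to $M_r(\hat{\rho}_{im,k}^u)^2(\rho_{nm,j}^u)^2$ and summed over the cooperating cells, which produces the leading interference term of (\ref{eq:R_bar_u_MF}); the estimation-error term yields the $P_u\sum_m(\hat{\rho}_{im,k}^u)^2(\bar{\rho}_{im,k}^u)^2$ contribution. The quantization-noise variance requires evaluating $\mathbb{E}\|\mathbf{y}_j^u\|^2$ from (\ref{eq: y_ul_i}) term by term and dividing by $2^{C_u}-1$; summing the UL signal power, the BS--BS and SI powers, the Tx-noise and Rx-distortion powers, and the thermal noise yields the expression for $(\sigma_m^{u,\text{MF}})^2$ in (\ref{eq: sigma_ul}), after which the filtered quantization contribution $\sum_m(\sigma_m^{u,\text{MF}})^2(\hat{\rho}_{im,k}^u)^2/\sum_n(\hat{\rho}_{in,k}^u)^2$ follows immediately.

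For the ZF case I would use the zero-forcing identity $\mathbf{W}^\text{ZF}\hat{\mathbf{G}}^u=\mathbf{I}$, so that the filter cancels the intra/inter-cell interference built from the estimated channels and only the estimation error, residual SI, quantization noise, hardware impairments, and thermal noise survive. The residual effective gain then hinges on the expected diagonal entries of the inverse Wishart matrix $(\hat{\mathbf{G}}^u(\hat{\mathbf{G}}^u)^H)^{-1}$, whose mean scales as $1/\{(\hat{\rho}_{ik,\text{avg}}^u)^2(M_rN-K_\text{UL}N+1)\}$; this is exactly the factor multiplying the denominator of (\ref{eq:R_bar_u_ZF}). Replacing the per-cell large-scale gains of a given cell-boundary user by their cell-averaged values $(\hat{\rho}_{ik,\text{avg}}^u)^2$, $(\bar{\rho}_{ik,\text{avg}}^u)^2$, $g_\text{avg}$, $g_\text{avg}^\text{diag}$, and $(\sigma_\text{avg}^{u,ZF})^2$ then collapses the remaining sums into the stated closed form.

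The hardest part will be the residual SI together with the Tx-noise and Rx-distortion terms, since these couple the UL detection filter with the DL precoder $\mathbf{F}^\text{C}$ and with the impairment covariances that themselves depend on the received SI signal $\mathbf{y}^\text{C,SI}$. Careful bookkeeping is needed to separate the diagonal SI block $\mathbf{G}^\text{BS}_\text{diag}$ from the off-diagonal BS--BS block $\mathbf{G}^\text{BS}_\text{off}$, to insert the residual-SI variance $\mathrm{Var}[\tilde{\epsilon}_{ii,m}^\text{BS}]$ supplied by Theorem \ref{theorem: 1}, and to track how the factors $\beta$ and $\alpha\beta$ emerge when the Rx-distortion covariance $\beta\,\text{diag}(\mathbf{y}^\text{C,SI}(\mathbf{y}^\text{C,SI})^H)$ is expanded; this is precisely what generates the $(\beta+\alpha\beta)$ cross terms in both $I_{u,i}^\text{C,MF}$ and the ZF denominator. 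Once these impairment terms are handled, the remaining steps are routine applications of the Gaussian and Wishart moment identities above.
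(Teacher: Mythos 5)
Your proposal is correct and follows essentially the same route as the paper's Appendix E: term-by-term evaluation of the residual variances via Gaussian moment and trace identities (with the ratio-of-expectations step inherited from Appendix B), the block-structured treatment of $\mathbf{G}^\text{BS}_\text{off}$ versus $\mathbf{G}^\text{BS}_\text{diag}$ coupled with the DL precoder, the quantization noise obtained from $\mathbb{E}\|\mathbf{y}_j^u\|^2$ divided by $2^{C_u}-1$, and the inverse-Wishart expectation together with cell-averaged large-scale gains for the ZF case. The paper likewise singles out the BS--BS/impairment cross terms as the only nontrivial computation, so no gap to report.
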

\begin{proof}
See Appendix E. \end{proof}
\section{Performance Analysis} \label{sec: analysis}
The performance of the full-duplex cellular system heavily depends  on the channel estimation method and the level of the aforementioned interference-- SI and BS-BS interference and user-user interference. Here, we discuss the feasibility of a  full-duplex large-scale MIMO system by analyzing the ergodic achievable sum-rate based on obtained analytic results. Furthermore, we investigate the  reliable region which indicates the interval that can gaurantee a better sum-rate from a full-duplex system than from a half-duplex system.  For comparison, without loss of generality, we assume that the same linear filter is applied to both system. Also, we assume that all cells operate either as half-duplex or full-duplex systems  \cite{Lee2015hybrid}. Basically, the reliable region when considering pilot overhead can be obtained by solving the following inequality.
\begin{eqnarray} \label{eq:condition}
2(1-\frac{\tau_{FD}}{T}) \left(\sum_{k=1}^{K_\text{DL}}\mathrm{R_{d,k}^\text{FD}}+\sum_{k=1}^{K_\text{UL}}\mathrm{R_{u,k}^\text{FD}} \right) \ge (1-\frac{\tau_\text{HD}}{T}) \left(\sum_{k=1}^{K_\text{DL}}\mathrm{R_{d,k}^\text{HD}}+\sum_{k=1}^{K_\text{UL}}\mathrm{R_{u,k}^\text{HD}} \right), 
\end{eqnarray}
where $\tau_\text{FD}, \tau_\text{HD}, R_{d,k}^\text{FD}$ and $R_{u,k}^\text{FD}$ denote the pilot overhead of the full-duplex and the half-duplex system, and the achievable rate of the full-duplex system for the DL and UL user $k$, respectively.  $T$ denotes the number of total symbols per coherence time, $T_\text{cohe}$, which includes all pilots and data symbols.  Since we assume that symbol duration is unchanged, the increase of $T$ implies the increase of $T_\text{cohe}$. In (\ref{eq:condition}), we define $R_{d,k}^\text{FD}=\mathcal{C} \Big( \text{SINR}_{\text{DL},k}^\text{FD} \Big)$, where $\text{SINR}_{\text{DL},k}^\text{FD}={S_{\text{DL},k}^\text{FD}}/(I_{\text{DL},k} + I_{\text{DL},k}^\text{FD})$, and $R_{u,k}^\text{FD}=\mathcal{C} \Big( \text{SINR}_{\text{UL},k}^\text{FD} \Big)$, where $\text{SINR}_{\text{UL},k}^\text{FD}=S_{\text{UL},k}^\text{FD}/(I_{\text{UL},k} + I_{\text{UL},k}^\text{FD})$. $S_{\text{DL},k}^\text{FD}$ and $I_{\text{DL},k}$ denote the power of the desired signal of user $k$ and the sum of power of all pre-existing DL interference such as intra and intercell interference, channel estimation error, and noise.  $I_{\text{DL},k}^\text{FD}$ denotes the sum of the power of DL interference induced by full-duplex BS.  In this regard, $I_{\text{UL},k}^\text{FD}$ denotes the sum of  the power of SI, BS-BS interference, Tx noise, and Rx distortion. In the half-duplex cellular system, we denote the SINR of user $k$ as $\text{SINR}_{\text{DL},k}^\text{HD}=S_{\text{DL},k}^\text{HD}/I_{\text{DL},k}^\text{HD}$ and $\text{SINR}_{\text{UL},k}^\text{HD}=S_{\text{UL},k}^\text{HD}/I_{\text{UL},k}^\text{HD}$. 
In order to determine the reliable region, we partition (\ref{eq:condition}) into DL and UL transmission. Otherwise, even if we obtain a more achievable sum-rate for a full-duplex system,  the achievable sum-rate of either DL or UL could be smaller than those of the half-duplex system. Thus, we derive conditions for the reliable region that satisfy both 
\begin{eqnarray}
2(1-\frac{\tau_\text{FD}}{T}) \sum_{k=1}^{K_\text{DL}}{R_{d,k}^\text{FD}} \ge (1-\frac{\tau_\text{HD}}{T}) \sum_{k=1}^{K_\text{DL}}{R_{d,k}^\text{HD}} \quad \text{and} \label{eq: condition_dl}\\ 
2(1-\frac{\tau_\text{FD}}{T}) \sum_{k=1}^{K_\text{UL}}{R_{u,k}^\text{FD}}  \ge (1-\frac{\tau_\text{HD}}{T}) \sum_{k=1}^{K_\text{UL}}{R_{u,k}^\text{HD}}. \nonumber \label{eq: condition_ul}
\end{eqnarray}

\begin{lemma}[nSPT] 
Using the nSPT, the maximum tolerant power  of the interference and the minimum  required coherence time for the both multicell scenarios is 
\begin{subequations}
\begin{eqnarray}
&&\hspace{-1cm}{\text{I}_{\text{DL},k}^\text{FD}} \leq \big(\frac{T^\text{nSPT}-2(\tau_\text{SI}+\tau_\text{UU})-\tau_\text{UD} }{T^\text{nSPT}-\tau_\text{UD} }\big) {\text{I}_{\text{DL},k}}, \hspace{0.2cm}{\text{I}_{\text{UL},k}^\text{FD}} \leq \big(\frac{T^\text{nSPT}-2(\tau_\text{SI}+\tau_\text{UD})-\tau_\text{UU} }{T^\text{nSPT}-\tau_\text{UU} }\big) {\text{I}_{\text{UL},k}}, \label{eq: lem1_a}\\
&&\hspace{-1cm}T_\text{cohe} > T_{\text{cohe}}^{\text{nSPT}}=t_s \max \{ 2(\tau_\text{SI} + \tau_\text{UU})+ \tau_\text{UD},\; 2(\tau_\text{SI} + \tau_\text{UD})+ \tau_\text{UU} \} , \label{eq: lem1_b}
\end{eqnarray}
\end{subequations}
where  $T^\text{nSPT}$ and $t_s$ denote the total number of symbols for the nSPT and  symbol durations [sec/symbol].
\end{lemma}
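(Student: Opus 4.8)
The plan is to reduce the two reliable-region inequalities \eqref{eq: condition_dl} and \eqref{eq: condition_ul} to per-user conditions and then specialize the pilot overheads to the nSPT scheme. For the full-duplex system with nSPT all three pilots are sent orthogonally, so $\tau_\text{FD}=\tau_\text{SI}+\tau_\text{UU}+\tau_\text{UD}$; for the half-duplex reference no SI channel must be estimated, and DL (resp.\ UL) transmission only needs $\tau_\text{UD}$ (resp.\ $\tau_\text{UU}$) by reciprocity, so $\tau_\text{HD}=\tau_\text{UD}$ in \eqref{eq: condition_dl} and $\tau_\text{HD}=\tau_\text{UU}$ in \eqref{eq: condition_ul}. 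First I would invoke the modeling convention of Section~\ref{sec: analysis}: the desired-signal power and the pre-existing interference ($S_{\text{DL},k}^\text{FD}=S_{\text{DL},k}^\text{HD}$ and $I_{\text{DL},k}=I_{\text{DL},k}^\text{HD}$, and likewise for UL) coincide in both modes, so the full-duplex rate differs from the half-duplex rate only through the extra term $I_{\text{DL},k}^\text{FD}$. Since every summand on each side is nonnegative, it suffices to establish each inequality user by user; this makes the argument independent of the explicit SINR expressions of Theorems~\ref{theorem: 2}--\ref{theorem: 5} and therefore uniform across both multicell scenarios.

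Next I would introduce the effective time-resource ratios $\mu_\text{DL}=2(T-\tau_\text{FD})/(T-\tau_\text{UD})$ and $\mu_\text{UL}=2(T-\tau_\text{FD})/(T-\tau_\text{UU})$, so that the per-user form of \eqref{eq: condition_dl} reads $\mu_\text{DL}\,R_{d,k}^\text{FD}\ge R_{d,k}^\text{HD}$ and similarly for UL. The coherence-time requirement \eqref{eq: lem1_b} then emerges directly as $\mu_\text{DL}\ge 1$ and $\mu_\text{UL}\ge 1$: solving $2(T-\tau_\text{FD})\ge T-\tau_\text{UD}$ gives $T\ge 2(\tau_\text{SI}+\tau_\text{UU})+\tau_\text{UD}$, the UL counterpart gives $T\ge 2(\tau_\text{SI}+\tau_\text{UD})+\tau_\text{UU}$, and taking the maximum and multiplying by $t_s$ yields $T_\text{cohe}^\text{nSPT}$.

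The substantive step is to convert the logarithmic inequality $\mu R^\text{FD}\ge R^\text{HD}$ into the linear interference bound of \eqref{eq: lem1_a}. I would impose $I_{\text{DL},k}^\text{FD}\le(\mu_\text{DL}-1)I_{\text{DL},k}$, so that $I_{\text{DL},k}+I_{\text{DL},k}^\text{FD}\le \mu_\text{DL} I_{\text{DL},k}$ and hence $R_{d,k}^\text{FD}\ge \mathcal{C}\!\big(S_{\text{DL},k}^\text{FD}/(\mu_\text{DL} I_{\text{DL},k})\big)$. It then remains to verify that $f(\mu)=\mu\log_2(1+x/\mu)$ is nondecreasing for $\mu\ge 1$ at every fixed $x=S_{\text{DL},k}^\text{FD}/I_{\text{DL},k}>0$; differentiating gives $f'(\mu)=\tfrac{1}{\ln 2}\big[\ln(1+x/\mu)-\tfrac{x/\mu}{1+x/\mu}\big]\ge 0$ by the elementary bound $\ln(1+u)\ge u/(1+u)$. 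Consequently $\mu_\text{DL}R_{d,k}^\text{FD}\ge f(\mu_\text{DL})\ge f(1)=R_{d,k}^\text{HD}$, which is exactly the per-user reliable-region inequality; rearranging $I_{\text{DL},k}^\text{FD}\le(\mu_\text{DL}-1)I_{\text{DL},k}$ reproduces the stated DL expression, and the UL case is identical with $\tau_\text{UD}$ replaced by $\tau_\text{UU}$.

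I expect the main obstacle to be this last conversion: the reliable-region condition is a priori transcendental in $I^\text{FD}$, and the linear bound in \eqref{eq: lem1_a} is only sufficient, not equivalent. The monotonicity of $f$ is what makes it tight precisely at the coherence boundary $\mu=1$, where the condition collapses to $R^\text{FD}\ge R^\text{HD}$; away from that boundary it is the factor-two throughput gain of simultaneous full-duplex transmission that supplies the slack the bound quantifies. I would also take care to justify the matched-power convention, since without $S^\text{FD}=S^\text{HD}$ and $I=I^\text{HD}$ the per-user reduction and the emergence of the single ratio $\mu$ would not go through.
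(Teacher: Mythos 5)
Your proof is correct and follows the same overall skeleton as the paper's: per-user reduction, the identifications $S_{\text{x},k}^\text{FD}=S_{\text{x},k}^\text{HD}$ and $I_{\text{x},k}=I_{\text{x},k}^\text{HD}$ justified by the orthogonality of the nSPT pilots, rearrangement into a linear bound on $I_{\text{x},k}^\text{FD}$, and positivity of the resulting coefficient to obtain the coherence-time threshold. The one genuinely different ingredient is how you justify that the linear interference bound suffices for the log-domain rate inequality. The paper passes from $2(1-\tau_\text{FD}/T)\,\mathcal{C}(\text{SINR}^\text{FD})\ge(1-\tau_\text{HD}/T)\,\mathcal{C}(\text{SINR}^\text{HD})$ to the linearized condition $2(1-\tau_\text{FD}/T)\,\text{SINR}^\text{FD}\ge(1-\tau_\text{HD}/T)\,\text{SINR}^\text{HD}$ by invoking Bernoulli's inequality $(1+y)^r\ge 1+ry$ and then solves that linear inequality for $I_{\text{x},k}^\text{FD}$; you instead keep the logarithm and prove monotonicity of $f(\mu)=\mu\log_2(1+x/\mu)$ for $\mu\ge 1$, so that $I^\text{FD}\le(\mu-1)I$ together with $\mu\ge 1$ gives $\mu R^\text{FD}\ge f(\mu)\ge f(1)=R^\text{HD}$ directly. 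The two routes yield the identical bound, since your $\mu-1$ equals the paper's coefficient $(T-2\tau_\text{FD}+\tau_\text{HD})/(T-\tau_\text{HD})$, but yours is the more careful argument: the Bernoulli lower bound needs the exponent $2(1-\tau_\text{FD}/T)\ge 1$, i.e.\ $T\ge 2\tau_\text{FD}$, which is strictly stronger than the stated threshold $T\ge 2\tau_\text{FD}-\tau_\text{HD}$, whereas your condition $\mu\ge 1$ matches the stated threshold exactly. You also make explicit two points the paper leaves implicit: that the half-duplex overhead is $\tau_\text{UD}$ on the DL branch and $\tau_\text{UU}$ on the UL branch, and that the summand-by-summand reduction is what makes the conclusion uniform over both multicell scenarios.
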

\begin{proof}
Based on  Bernoulli's inequality, where $(1+x)^r \ge 1+rx$ for every integer $r$ and every real number $x \ge -1$, we reformulate (\ref{eq: condition_dl}) as  
\begin{eqnarray}\label{eq: proof}
2\left(1-\frac{\tau_\text{FD}}{T}\right) {\text{SINR}_{\text{x},k}^\text{FD}} \geq \left(1-\frac{\tau_\text{HD}}{T}\right) {\text{SINR}_{\text{x},k}^\text{HD}}
\end{eqnarray}
for cell-boundary DL user $k$, where $\text{x}\in\{\text{DL}, \text{UL}\}$.  Since, with nSPT, we send the pilot orthogonally in the time domain, the channel estimation errors of DL and UL in the half-duplex system are the same as those of the full-duplex system with nSPT. We then say that  
(${S_{\text{DL},k}^\text{FD}} = {S_{\text{DL},k}^\text{HD}}$,  ${I_{\text{DL},k}} = \mathrm{I_{\text{DL},k}^\text{HD}}$) and (${S_{\text{UL},k}^\text{FD}} = {S_{\text{UL},k}^\text{HD}}$,  ${I_{\text{UL},k}} = \mathrm{I_{\text{UL},k}^\text{HD}}$) for all user $k$. After we reorganize (\ref{eq: proof}) in terms of $I^\text{FD}_\text{x,k}$,  we obtain (\ref{eq: lem1_a}). Then, since the coefficient of $I_\text{x,k}$ needs to be larger than zero in order to satisfy (\ref{eq: lem1_a}), we obtain (\ref{eq: lem1_b}) from (\ref{eq: lem1_a}). 
\end{proof}

\begin{lemma}[SPT]
Using the SPT, the maximum tolerant power  of the interference and the minimum required coherence time for  both multicell scenarios is 
\begin{subequations}
\begin{eqnarray}
&&\hspace{-0.5cm}{\text{I}_{\text{DL},k}^\text{FD}} \leq \left(\frac{T^{SPT}-2\tau_\text{max}-\tau_\text{UD} }{T^{SPT}-\tau_\text{UD} }\right) {\text{I}_{\text{DL},k}} \;\; \mathrm{and} \;\;
{I_{\text{UL},k}^\text{FD}} \leq \frac{2(T^{SPT}-\tau_\text{max}-\tau_\text{UD}) \hspace{0.1cm}{S_{\mathrm{UL},k}^\mathrm{FD}} }{ (T^{SPT}-\tau_\text{UU})\hspace{0.1cm}{\mathrm{SINR_{UL,k}^{HD}} } }-{\text{I}_{\text{UL},k}}, \label{eq: lem2_a}\\
&&\hspace{-0.5cm} T_\text{cohe} \geq T_\text{cohe}^{\text{SPT}}=t_s \max \{ 2\tau_{\max}+\tau_\text{UD},\; T_\text{max}^\text{UL}\} , \label{eq: lem2_b}\\ 
&&\hspace{-0.4cm}\textrm{where} \;\;T_{\max}^\text{UL} = \max \left \{ \frac{2\tau_\text{FD} \mathrm{SINR_{UL,1}^{FD}} - \tau_\text{HD} \mathrm{SINR_{UL,1}^{HD}} }{ 2\mathrm{SINR_{UL,1}^{FD}} - \mathrm{SINR_{UL,1}^{HD}} }, \ldots , \frac{2\tau_\text{FD} \mathrm{SINR}_{\mathrm{UL},K}^\mathrm{FD} - \tau_\mathrm{HD} \mathrm{SINR}_{\mathrm{DL},K}^\mathrm{HD} }{ 2\mathrm{SINR}_{\mathrm{DL},K}^\mathrm{FD} - \mathrm{SINR}_{\mathrm{DL},K}^\mathrm{HD} }  \right\}. \nonumber
\end{eqnarray}
\end{subequations}
$T^\text{nSPT}$ denotes the total number of symbol for the nSPT. The analytic value of  $\mathrm{SINR_{UL,k}^{FD}}$ and $\mathrm{SINR_{UL,k}^{HD}}$ are given in Sec. \ref{sec: sum-rate}. We recall that $\tau_{\max}=\max(\tau_\text{SI}, \tau_\text{UU})$. 
\end{lemma}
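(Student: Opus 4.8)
The plan is to follow the same reduction used in the proof of the preceding lemma (nSPT) and then isolate the one place where SPT genuinely differs. First I would apply Bernoulli's inequality to the separated downlink and uplink conditions in (\ref{eq: condition_dl}), collapsing each sum-rate requirement into the per-user SINR requirement (\ref{eq: proof}), now with the SPT full-duplex pilot overhead $\tau_\text{FD}=\tau_\text{max}+\tau_\text{UD}$ (the SI and UL pilots having been merged into $\tau_\text{max}$ symbols, while the DL pilots from the DL users still occupy $\tau_\text{UD}$ symbols). The crucial observation is that SPT merges \emph{only} the SI and UL pilots: the DL channel is still estimated from the $\tau_\text{UD}$ pilots sent by the DL users, so the DL estimate is identical to the half-duplex one, whereas the UL estimate is degraded by the additional interference term present in the SPT uplink estimate of Theorem \ref{theorem: 1} (absent from the nSPT estimate in (\ref{eq: est_nSPT_b})). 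This asymmetry forces two different treatments of the DL and UL parts.

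For the downlink, since the DL estimate is unchanged I would reuse the nSPT argument verbatim: set $S_{\text{DL},k}^\text{FD}=S_{\text{DL},k}^\text{HD}$ and $I_{\text{DL},k}=I_{\text{DL},k}^\text{HD}$ in (\ref{eq: proof}), cancel the common desired-signal power, and solve for $I_{\text{DL},k}^\text{FD}$ with $\tau_\text{FD}=\tau_\text{max}+\tau_\text{UD}$ and $\tau_\text{HD}=\tau_\text{UD}$. This reproduces the first inequality of (\ref{eq: lem2_a}), and requiring its coefficient $(T^\text{SPT}-2\tau_\text{max}-\tau_\text{UD})/(T^\text{SPT}-\tau_\text{UD})$ to be nonnegative, so that a physical $I_{\text{DL},k}^\text{FD}\ge 0$ exists, yields the downlink coherence threshold $T>2\tau_\text{max}+\tau_\text{UD}$, the first entry in the maximum of (\ref{eq: lem2_b}).

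The uplink is the main obstacle and must be handled \emph{without} the cancellation. Because the SPT uplink estimate differs from the half-duplex one, neither $S_{\text{UL},k}^\text{FD}=S_{\text{UL},k}^\text{HD}$ nor $I_{\text{UL},k}=I_{\text{UL},k}^\text{HD}$ holds, so I would keep $S_{\text{UL},k}^\text{FD}$, $\text{SINR}_{\text{UL},k}^\text{HD}$ and $I_{\text{UL},k}$ explicit. Writing $\text{SINR}_{\text{UL},k}^\text{FD}=S_{\text{UL},k}^\text{FD}/(I_{\text{UL},k}+I_{\text{UL},k}^\text{FD})$ in (\ref{eq: proof}) and solving for $I_{\text{UL},k}^\text{FD}$ gives the second inequality of (\ref{eq: lem2_a}), where the ratio $(1-\tau_\text{FD}/T)/(1-\tau_\text{HD}/T)$ becomes $(T-\tau_\text{max}-\tau_\text{UD})/(T-\tau_\text{UU})$. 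For the coherence requirement I would instead treat $\text{SINR}_{\text{UL},k}^\text{FD}$ as the analytic full-duplex value obtained in Section \ref{sec: sum-rate} (Theorems \ref{theorem: 3} and \ref{theorem: 5}) and solve (\ref{eq: proof}) directly for $T$, obtaining $T\ge(2\tau_\text{FD}\,\text{SINR}_{\text{UL},k}^\text{FD}-\tau_\text{HD}\,\text{SINR}_{\text{UL},k}^\text{HD})/(2\,\text{SINR}_{\text{UL},k}^\text{FD}-\text{SINR}_{\text{UL},k}^\text{HD})$ per user; taking the maximum over the $K_\text{UL}$ users produces $T_\text{max}^\text{UL}$.

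Finally I would combine the two thresholds into $T_\text{cohe}^\text{SPT}=t_s\max\{2\tau_\text{max}+\tau_\text{UD},\,T_\text{max}^\text{UL}\}$, completing (\ref{eq: lem2_b}). I expect two delicate points. First, dividing by $2\,\text{SINR}_{\text{UL},k}^\text{FD}-\text{SINR}_{\text{UL},k}^\text{HD}$ preserves the inequality direction only under the (physically mild) assumption $2\,\text{SINR}_{\text{UL},k}^\text{FD}>\text{SINR}_{\text{UL},k}^\text{HD}$, which I would state explicitly. Second, I would verify the consistency of the two coherence thresholds: evaluating the uplink-type formula at the best case $I^\text{FD}=0$, where $\text{SINR}^\text{FD}$ collapses to $\text{SINR}^\text{HD}$, returns exactly $2\tau_\text{FD}-\tau_\text{HD}=2\tau_\text{max}+\tau_\text{UD}$, matching the downlink threshold and confirming that the two entries of the maximum in (\ref{eq: lem2_b}) arise from the same underlying inequality specialized to the two link directions.
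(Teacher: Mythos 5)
Your proposal follows essentially the same route as the paper's proof: reduce to the per-user SINR inequality (\ref{eq: proof}) via Bernoulli's inequality, reuse the nSPT cancellation for the downlink (since the DL estimate is unchanged under SPT), and, because the SPT uplink estimation error differs, solve (\ref{eq: proof}) directly for $I_{\text{UL},k}^\text{FD}$ and for $T$ to obtain the second inequality of (\ref{eq: lem2_a}) and $T_{\max}^\text{UL}$. Your added remarks on the sign condition $2\,\mathrm{SINR}_{\text{UL},k}^\text{FD}>\mathrm{SINR}_{\text{UL},k}^\text{HD}$ and the consistency of the two thresholds are correct refinements that the paper leaves implicit.
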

\begin{proof}
In a  manner similar to that of Lemma 1, we start with the proof from (\ref{eq: proof}). To estimate the DL channels, we have (${S_{\text{DL},k}^\text{FD}} = {S_{\text{DL},k}^\text{HD}}$,  ${I_{\text{DL},k}} = \mathrm{I_{\text{DL},k}^\text{HD}}$) for the same reason as in  Lemma 1. However, to estimate the UL channels, we cannot calculate in the same way due to a different resulting channel estimation error. Thus, the second inequality of (\ref{eq: lem2_a}) is obtained by reorganizing (\ref{eq: proof}) in terms of $\mathrm{I_{UL,k}^{FD}}$. $\mathrm{T_{\max}^{UL}}$ in (\ref{eq: lem2_b}) is obtained by reorganizing (\ref{eq: proof}) in terms of $T$. 
\end{proof}

\begin{remark}[Sensitivity to the Coherence Time]
Comparing (\ref{eq: lem1_b}) with (\ref{eq: lem2_b}), we first observe that $2\tau_{\max}+\tau_\text{UD} < 2(\tau_\text{SI}+\tau_\text{UU})+\tau_\text{UD}$. Next, from (\ref{eq: lem2_b}),  $T_\mathrm{max}^\mathrm{UL}= (2\tau_\text{FD}-\alpha\tau_\text{HD} )(2-\alpha)$, where $\alpha=\mathrm{SINR_{UL,k}^{HD}}/\mathrm{SINR_{UL,k}^{FD}} \le 1$ and $k$ is the maximization index. We  obtain the maximum value of $\mathrm{T_{max}^{UL}}$ as $2\tau_\text{FD}-\tau_\text{HD}=2\tau_\text{max}-\tau_\text{UD}$ when $\alpha=1$, which is smaller than $2(\tau_\text{SI}+\tau_\text{UD})+\tau_\text{UU}$ in (\ref{eq: lem1_a}). Accordingly, we say that the SPT is less sensitive to the coherence time than  to nSPT. i.e. $T_\text{cohe}^\text{SPT} \le T_\text{cohe}^\text{nSPT}$. 
\end{remark}

\begin{lemma}[Asymptotic Achievable Sum-Rate of Full-Duplex System in Non-Cooperative Multicell] \label{lem: M1}
From (\ref{eq:R_d_MF}) and (\ref{eq:R_d_ZF}), we obtain the asymptotic result on the downlink ergodic achievable sum-rate given as 
\begin{eqnarray}
&&\lim_{M\to \infty} R_{d,i}^\text{MF} \to \sum_{k=1}^{K_\text{DL}} \log_2\left(1+ \frac{P_r (\hat{\rho}_{ii,k}^d)^4 }{ \sum_{k=1}^{K_{\text{DL}}} (\hat{\rho}^d_{ii,k})^2 }\sqrt{M} \right), \label{eq: R_MF_non_lim}\\
&&\lim_{M\to \infty} R_{d,i}^\text{ZF} \to \sum_{k=1}^{K_\text{DL}} \log_2\left(1+ P_r (\hat{\rho}_{ii,k}^d)^2\sqrt{M} \right). \label{eq: R_ZF_non_lim}
\end{eqnarray}
From (\ref{eq:R_u_MF}) and (\ref{eq:R_u_ZF}), the asymptotic result on the uplink ergodic achievable sum-rate is    
\begin{eqnarray}
&&\lim_{M\to \infty} R_{u,i}^\text{MF} \to \sum_{k=1}^{K_\text{UL}} \log_2\left(1+ \frac{P_r (\hat{\rho}^u_{ii,k})^2}{1+\beta}\sqrt{M} \right), \label{eq: R_MF_lim}\\
&&\lim_{M\to \infty} R_{u,i}^\text{ZF} \to \sum_{k=1}^{K_\text{UL}} \log_2\left(1+ \frac{P_r (\hat{\rho}^u_{ii,k})^2}{1+\beta}\sqrt{M} \right), \label{eq: R_ZF_lim}
\end{eqnarray}
where, without loss of generality, we assume $M_t=M,P_d=P_u$ and $n_0=1$ for all cells. 
\end{lemma}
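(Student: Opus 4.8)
The plan is to start from the exact per-user SINR expressions furnished by Theorems \ref{theorem: 2} and \ref{theorem: 3}, insert the stated regime $M_t=M_r=M$, $P_d=P_u$, $n_0=1$ together with the massive-MIMO power-scaling law under which the per-antenna transmit power decays as $P_d=P_u=P_r/\sqrt{M}$ for a fixed reference power $P_r$ -- the scaling consistent with the $\sqrt{M}$ growth asserted in the claim and with the earlier observation that the per-antenna SI power depends only on the total power and the SI channel gain -- and then pass to the limit $M\to\infty$ by isolating the dominant-order terms of each numerator and denominator. The mechanism to exploit is that the linear-filter array gain in every numerator, namely $M_t(M_t+1)$ for the MF precoder and $M_t-K_\text{DL}+1$ (resp.\ $M_r-K_\text{UL}+1$) for the ZF filters, grows polynomially in $M$ and overcomes the $1/\sqrt{M}$ power decay, while the MMSE large-scale coefficients $(\hat{\rho}^{\cdot})^2$ and $(\bar{\rho}^{\cdot})^2$ are held $M$-independent.

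For the downlink I would dispatch the MF case \eqref{eq:R_d_MF} first: its numerator is of order $P_d(\hat{\rho}_{ii,k}^d)^4 M^2\sim P_r(\hat{\rho}_{ii,k}^d)^4 M^{3/2}$, whereas the interference block $I_{d,i}^\text{MF}$ and the estimation-error term are only $O(P_d M)=O(\sqrt{M})$, so the denominator is governed by the $O(M)$ noise contribution $M\sum_k(\hat{\rho}_{ii,k}^d)^2 n_0$; the ratio keeps exactly the $\sqrt{M}$ factor and gives \eqref{eq: R_MF_non_lim}. For the ZF precoder \eqref{eq:R_d_ZF} the numerator is $O(P_r(\hat{\rho}_{ii,k}^d)^2\sqrt{M})$ while every denominator term but the floor $n_0=1$ carries a vanishing factor $P_d$, so the denominator tends to $1$ and \eqref{eq: R_ZF_non_lim} follows.

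The uplink is the crux and I would keep it for last. The numerators of \eqref{eq:R_u_MF} and \eqref{eq:R_u_ZF} grow as $P_u M(\hat{\rho}_{ii,k}^u)^2\sim P_r(\hat{\rho}_{ii,k}^u)^2\sqrt{M}$. The key step is to show that the self-interference blocks $I_{u,i}^\text{MF}$ and $I_{u,i}^\text{ZF}$ collapse to the receiver-distortion floor: the residual-SI, BS--BS, and Tx-noise contributions each appear as $P_d/(M_t M_r)$ multiplying a double sum $\sum_{\ell=1}^{M_r}\sum_{m=1}^{M_t}(\cdot)^2$ over the $M^2$ antenna pairs, so they reduce to $P_d\times\text{const}\to 0$, whereas the $\beta n_0$ term inside the brace is $M$-independent and survives. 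Hence both denominators tend to $1+\beta$, the MF and ZF limits coincide, and \eqref{eq: R_MF_lim}--\eqref{eq: R_ZF_lim} follow.

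The hard part will be the order-bookkeeping in this last step: one must carefully separate single sums over $M$ indices (which scale linearly) from double sums over the $M_t M_r$ antenna pairs normalized by $M_t M_r$ (which stay bounded), and confirm that after multiplication by the decaying power only the $(1+\beta)$ receiver-distortion term persists while every SI, BS--BS, and transmitter-noise term vanishes. Establishing that the polynomial array gain strictly beats the $1/\sqrt{M}$ power decay -- rather than matching it or being dominated by it -- is what fixes the net $\sqrt{M}$ growth and is the only genuinely delicate point; the remaining manipulations are routine asymptotics applied inside $\mathcal{C}(\cdot)$.
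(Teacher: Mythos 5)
Your proposal is correct and follows essentially the same route as the paper's own (very terse) proof: impose the power-scaling law $P_d=P_u=P_r/\sqrt{M}$ from the large-scale MIMO literature and reorganize each SINR expression in powers of $M$, so that the array gain leaves a net $\sqrt{M}$ in the numerator while every interference term carrying a factor $P_d$ or $P_u$ vanishes and only the noise floor (plus the $\beta n_0$ receiver-distortion term in the uplink) survives in the denominator. Your explicit order bookkeeping, in particular the separation of the $O(1)$ normalized double sums over antenna pairs from the $O(M)$ noise terms, is exactly the "reorganizing with respect to $M$" that the paper leaves implicit.
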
     
\begin{proof}
Due to power scaling in large-scale MIMO \cite{Ngo13TCOM,SimJCN}, we have $P_d=P_u=P_r/\sqrt{M}$. Then, after reorganizing each equation with respect to $M$, we obtain the (\ref{eq: R_MF_non_lim}) - (\ref{eq: R_ZF_lim}). 
\end{proof}

\begin{lemma}[Asymptotic Achievable Sum-rate of Full-Duplex System in Cooperative Multicell]\label{lem: M2}
From (\ref{eq:R_bar_d_MF}) and (\ref{eq:R_bar_d_ZF}), the asymptotic bound on the DL achievable sum-rate is 
\begin{eqnarray}
&&\hspace{-1cm}\lim_{M\to \infty} R_{d,i}^\text{C,MF} \to \\
&&\sum_{k=1}^{K_\text{DL}} \log_2\left(1+ \frac{P_r(1-2^{-C_d})\left(\sum_{j=1}^N (\hat{\rho}_{ij,k}^d)^4 + \sum_{(n,m)\in \Omega_1} (\hat{\rho}_{in,k}^d)^2 (\hat{\rho}_{im,k}^d)^2\right)    \sqrt{M}  }{ \sum_{j=1}^N \left( (\sigma_j^{d,\text{MF}})^2 (\rho_{ij,k}^d)^2 \right) + \sum_{(i,j)=(1,1)}^{(N,N)} \sum_{k=1}^{K_\text{DL}}(\hat{\rho}_{ij,k}^d)^2 } \right), \nonumber\\
&&\hspace{-1cm}\lim_{M\to \infty} R_{d,i}^\text{C,ZF} \to  \sum_{k=1}^{K_\text{DL}} \log_2\left(1+ \frac{P_r(1-2^{-C_d}) (\hat{\rho}_{ik,avg}^d)^2 \sqrt{M} }{M\sum_{j=1}^N \big(\rho_j^{d,ZF})^2(\rho_{ij,k}^d)^2 \big) +1 } \right). 
\end{eqnarray}
From (\ref{eq:R_bar_u_MF}) and (\ref{eq:R_bar_u_ZF}), the asymptotic bound on the uplink achievable sum-rate is 
\begin{eqnarray}
&& \lim_{M\to \infty} R_{u,i}^\text{C, MF} \to \sum_{k=1}^{K_\text{UL}} \log_2\left(1+ \frac{ P_r \sum_{n=1}^N (\hat{\rho}_{in,k}^u)^2 \sqrt{M} }{ \sum_{n=1}^{N} (\sigma_{n}^{u,\text{MF}})^2 (\hat{\rho}_{in,k}^u)^2 / \sum_{n=1}^N (\hat{\rho}_{in,k}^u)^2 + \beta + 1} \right), \\
&& \lim_{M\to \infty} R_{u,i}^\text{C, ZF} \to \sum_{k=1}^{K_\text{UL}} \log_2\left(1+ \frac{P_r \sum_{n=1}^N (\hat{\rho}_{in,k}^u)^2  \sqrt{M}}{ \sum_{n=1}^N (\sigma_n^{u, ZF})^2 / N + \beta + 1} \right),
\end{eqnarray}
where, without loss of generality, we assume $M_t=M,P_d=P_u$ and $n_0=1$ for all cells. 
\end{lemma}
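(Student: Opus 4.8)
The plan is to follow the template of Lemma~\ref{lem: M1}, but to carry the front-haul quantization variances through the limit, since these encode the cost of cooperation and are what make the present statement genuinely different from the non-cooperative case. I would begin by invoking the large-scale-MIMO power-scaling law used there, setting $P_d=P_u=P_r/\sqrt{M}$ and substituting the hypotheses $M_t=M$ (and, as in Lemma~\ref{lem: M1}, $M_r=M$) together with $n_0=1$. Because the desired-symbol power obeys $P_s=P_d(1-2^{-C_d})$, this simultaneously forces $P_s=P_r(1-2^{-C_d})/\sqrt{M}$, the factor that will surface in every numerator; this is precisely why each asymptotic expression carries the common prefactor $P_r(1-2^{-C_d})$.

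The core of the argument is then a bookkeeping of the order in $M$ of every additive term in the four SINR expressions of Theorems~\ref{theorem: 4} and~\ref{theorem: 5}. In the MF numerators of (\ref{eq:R_bar_d_MF}) and (\ref{eq:R_bar_u_MF}) the coherent array gain supplies a factor $M_t(M_t+1)\sim M^2$ (respectively $M_r$), so after multiplication by $P_s$ (respectively $P_u$) the desired-signal power scales like $M^{3/2}$ (respectively $\sqrt{M}$); the ZF numerators of (\ref{eq:R_bar_d_ZF}) and (\ref{eq:R_bar_u_ZF}) instead carry a single degrees-of-freedom factor $(M_tN-K_\text{DL}N+1)\sim M$. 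In each denominator I would group the terms into (i) the intra/inter-cell interference, estimation-error and UE-UE terms, each carrying one scaled power and hence being $O(\sqrt{M})$ or smaller; (ii) the thermal-noise term, which retains a bare array factor and is thus $O(M)$ since $n_0$ is not scaled; and (iii) the quantization term, which pairs an array factor with a quantization-noise variance $\sigma^2$. Factoring out the common dominant power of $M$ from numerator and denominator and discarding strictly subdominant contributions then leaves the stated $\sqrt{M}$-forms, with the quantization variances retained symbolically.

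The step I expect to be the main obstacle is precisely the treatment of the quantization variances $(\sigma_j^{d,\text{MF}})^2$, $(\sigma_j^{d,\text{ZF}})^2=P_s/N$, and $(\sigma_m^{u,\text{MF}})^2$ of (\ref{eq:I_q}) and (\ref{eq: sigma_ul}). Unlike Lemma~\ref{lem: M1}, where no front-haul is present and every rate grows like $\sqrt{M}$, here one must verify how the explicit $M_t$ (and $M_r$) factors inside the defining ratios combine with the scaled powers: in (\ref{eq:I_q}) the array factors cancel so that $(\sigma^{d})^2$ inherits only the $P_s$-scaling, whereas the received-signal power feeding the uplink variance $(\sigma^{u})^2$ contains self-interference and distortion contributions that grow with $M$, so the corresponding quantization term need not vanish. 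Settling these relative orders is exactly what decides whether a given cooperative SINR still grows with $M$ or instead saturates, and it is the only non-mechanical part of the proof. Once it is fixed, the remaining manipulations are the same routine order-collection as in Lemma~\ref{lem: M1}, so I would carry out the downlink MF case in full and remark that the other three are identical in structure.
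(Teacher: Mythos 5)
Your proposal is correct and follows essentially the same route as the paper, whose entire proof is the one-line remark that the result follows as in Lemma~\ref{lem: M1} by substituting the power scaling $P_d=P_u=P_r/\sqrt{M}$ (hence $P_s=P_r(1-2^{-C_d})/\sqrt{M}$) and reorganizing each SINR with respect to $M$. Your additional bookkeeping of the quantization variances --- in particular that the $M_t$ factors cancel inside $(\sigma^{d})^2$ while $(\sigma^{u})^2$ inherits $O(M)$ contributions from the self-interference and noise terms, so the cooperative quantization terms need not vanish --- is exactly the detail the paper leaves implicit and later relies on in its remark on front-haul capacity.
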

\begin{proof}
Similary, it follows the proof of Lemma \ref{lem: M1}. 
\end{proof}

\begin{remark}[Sum-Rate of Full-Duplex System in Multicell Systems]
From $\mathbf{Lemma\:3}$ and $\mathbf{Lemma\:4}$, for the same given user set, the asymptotic achievable sum-rate of the full-duplex system becomes almost two times greater than those of the half-duplex system even for cell-boundary users, but only if sufficient dynamic range is supported. This is because all interference is mitigated by scaling down the power as $M$ increases. We note, however, that channel estimation error including pilot contamination still affects the limit value. In a cooperative multicell system, additionally, there exists quantization noise in the limit value which linearly depends on $M$ (See $\mathbf{Theorem}$ \ref{theorem: 4} and \ref{theorem: 5}), since more information is required to exchange over the front-haul between the CU and BSs as $M$ increases. Thus, sufficient front-haul capacity, greater than those of non-cooperative multicell systems,  is required to guarantee the sum-rate in a cooperative multicell system. 
\end{remark}
         
\begin{figure}[t]
\centering
\includegraphics[width=17.3cm, height=7.5cm]{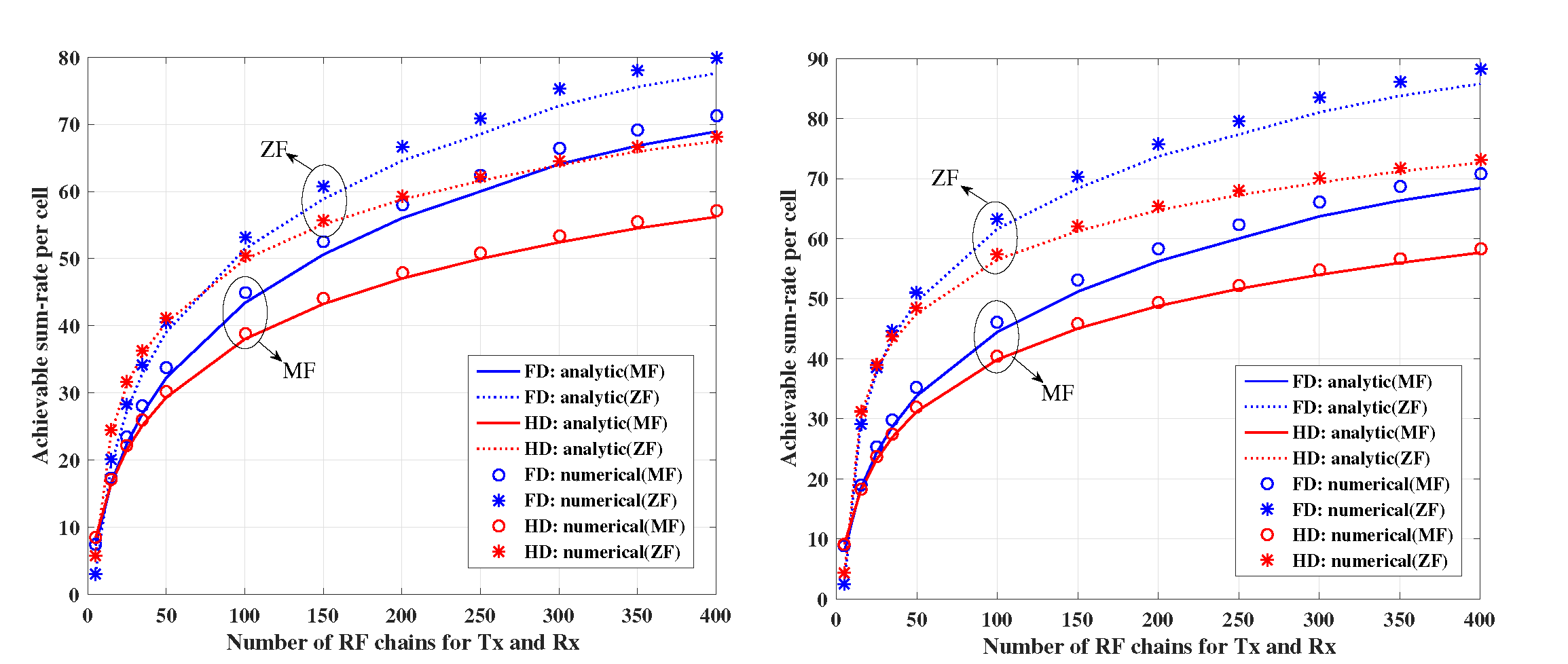}
\caption{Analytic and numerical results of the achievable sum-rate for cell-boundary users. FD and HD stand for  full-duplex and half-duplex system. We consider $N=3, K=5, r=2$km and $P_r=40$dBm. The left figure is for the non-cooperative multicell system;  the right figure is for the cooperative multicell system with $C_u=C_d=20$bps/Hz.} \label{fig: 1_analytic}
\end{figure}
\begin{figure}[t]
\centering
\includegraphics[width=16.8cm, height=7.3cm]{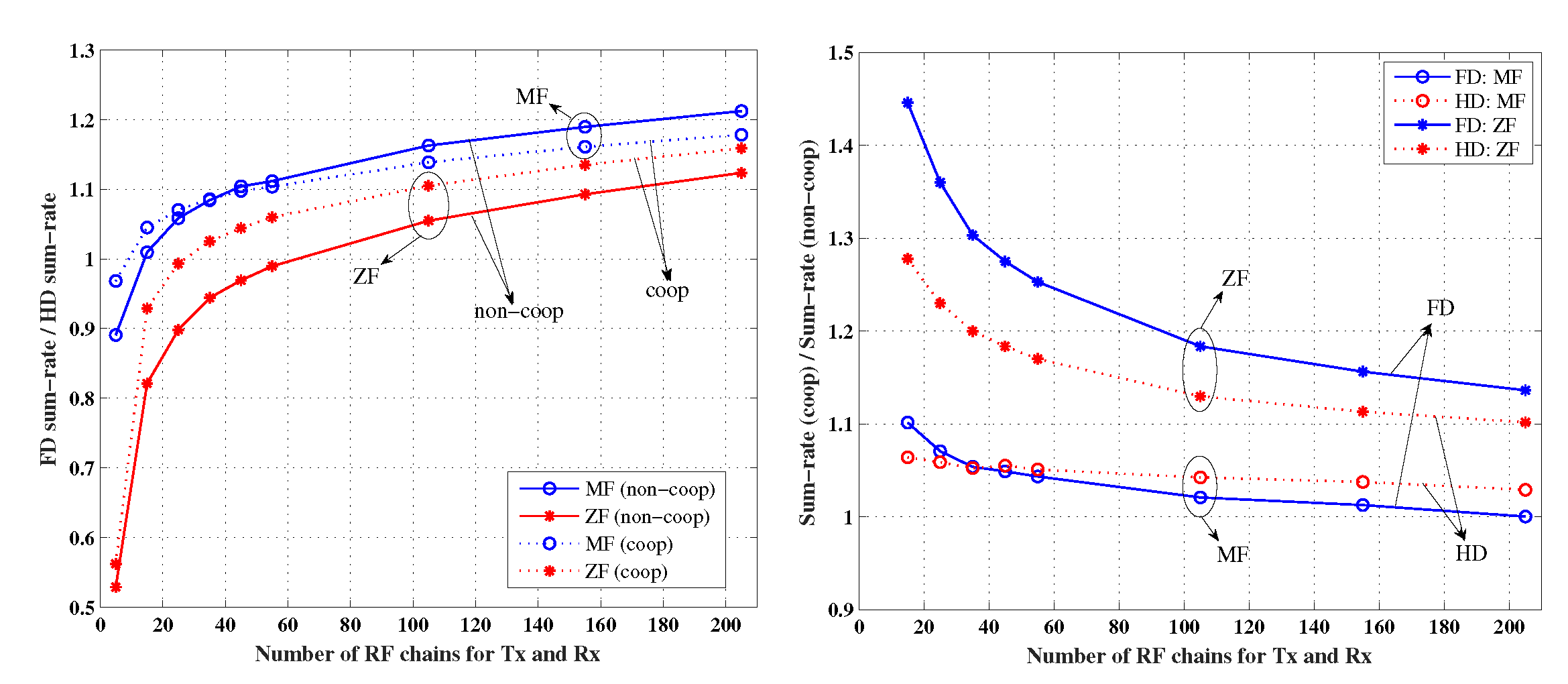}
\caption{Ratio of ergodic achievable sum-rate of full-duplex system to those of  half-duplex system in two different multicell scenarios. We consider the same environment as that shown in Fig. \ref{fig: 1_analytic}.  In the left figure,  coop and non-coop in parentheses stand for cooperative multicell and non-cooperative multicell systems.  } \label{fig: ratio}
\end{figure}
\section{Simulation results} \label{sec: simul}
In this section, we show the results derived in Sections \ref{sec: sum-rate} and  \ref{sec: analysis}, and compare them with the numerical results.  In order to consider cell-boundary users, we consider the users to be distributed around the bottom 5\% of the cell radius based on the LTE standard. We assume that  $M_t=M_r=M$, $K_\text{DL}=K_\text{UL}=K$ and $\alpha = \beta = -100 \text{dB}$ \cite{day2012full}. We assume the pilot overhead to be $\tau_{SI}=M, \tau_\text{UU}=K$, and $\tau_\text{UD}=K$ \cite{yin2013coordinated}.  The  large scale fading coefficient is modeled as $\rho=z/d^v$, where $z$ is a log-normal random variable with variance $\sigma_{\text{shadow}}=$8dB, and $d$ is the distance in meters between the BS and UEs. $v=3.8$ is the pathloss exponent. For the SI channel we assume a free space pathloss $\rho=(4\pi d/ \lambda)^2$, where $\lambda=c/f, c=3\times 10^8 $ and $f=2.4\text{GHz}$.  We define the reference transmission power $P_r$ for which $P_d=P_u=P_r /\sqrt{M}$ \cite{Ngo13TCOM}.  

\begin{figure}[t]
\centering
\includegraphics[width=16.6cm, height=8cm]{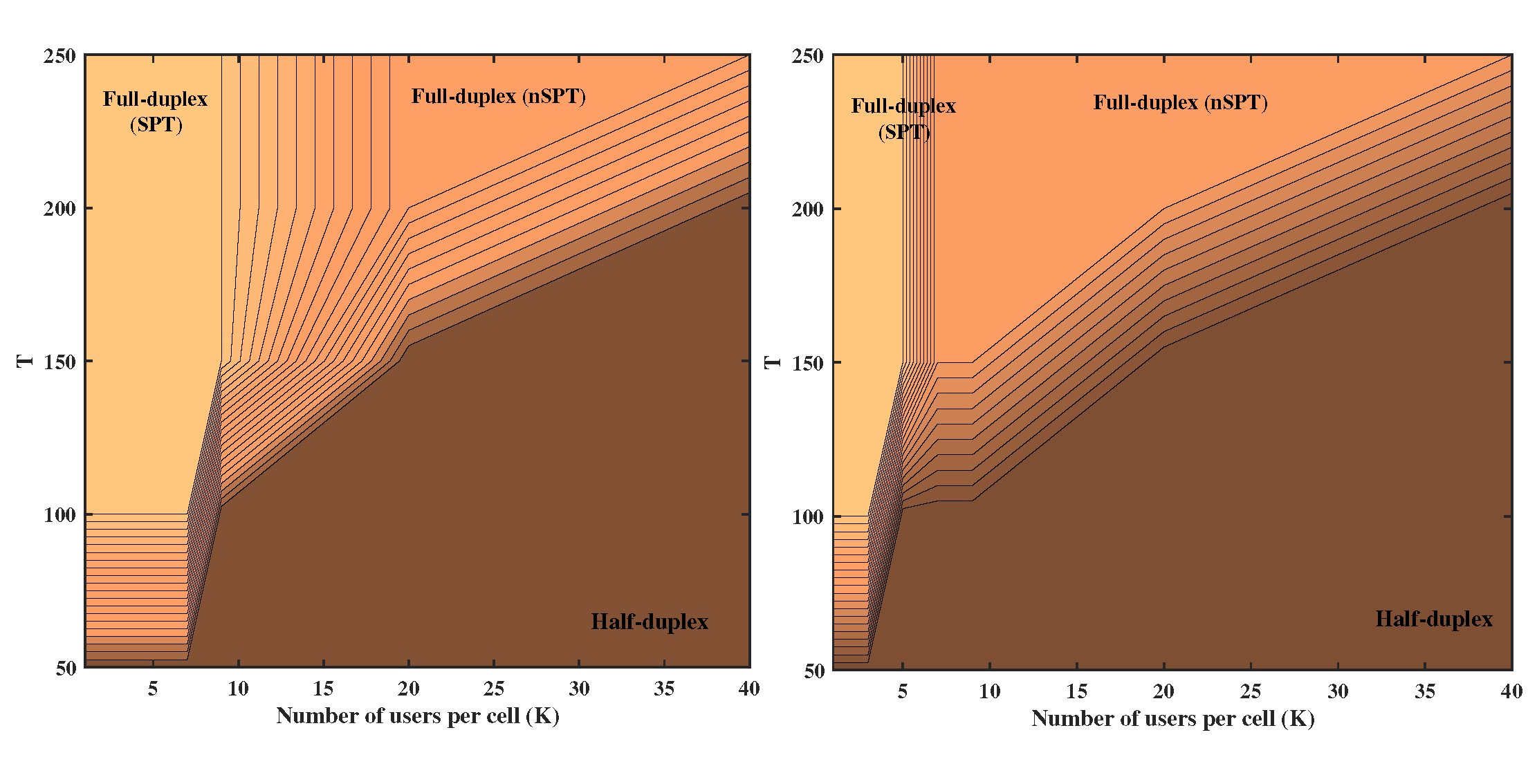}
\caption{Contour plots for comparison of achievable sum-rate of full-duplex system with SPT and nSPT, and the half-duplex system. We set values $N=3, M=50, r=2$km, and $P_r=20$dBm. (a) Non-cooperative multicell system. (b) Cooperative multicell system with $C_u=C_d=20$bps/Hz.  } \label{fig: NoCRAN_KT}
\end{figure}

Figure \ref{fig: 1_analytic}  offers a comparison of the analytic model derived in Section \ref{sec: sum-rate}  with the numerical results for  ZF and MF. In order to focus on the tightness of the analytic model, we consider only nSPT for the full-duplex system and we do not reflect the pilot overhead, since the only difference between nSPT and SPT is the resulting channel estimation error. In Fig. \ref{fig: 1_analytic}, our analytic model can be seen to be well matched with the numerical results. Typically, the performances of MF and ZF depend on the received SINR, which is affected by the radius of the cell or by the reference transmission power.    It is noticeable that the sum-rate of cell-boundary users in full-duplex is enhanced as $M$ increases, even more than those of sum-rate values in the half-duplex system.  This goes against the findings of  previous works \cite{day2012full}, \cite{kam2015bidirectional}. This is because user-user interference and SI can be mitigated because the total transmission power of  the BS and  the transmit power of each user  decrease as the number of antennas at the BS increases, \cite{Rusek13SPMAG, Ngo13TCOM}. On the left figure of Fig. \ref{fig: ratio}, it can be seen that there exists a sum-rate enhancement when using full-duplex operation on the BS side as $M$ increases. Specifically, if the value of the ratio becomes more than 1, the achievable sum-rate of the full-duplex system is larger than those values of  the half-duplex system. This is because the total DL transmit power and the UL transmit power of each user are scaled down as $M$ increases. On the right figure of Fig. \ref{fig: ratio}, we focus on determining the feasibility of cooperation of BSs via front-haul. Specifically, if the value of the  ratio is larger than 1, we are able to achieve more sum-rate via BS cooperation. However, as $M$ increases, we lose the advantage of BS cooperation for the following reasons: i) intercell interference is averaged out due to large number of RF chains even in non-cooperative multicell systems; and ii) cooperation of BSs requires more front-haul capacity for larger number of RF chains and incurs more Tx noise, Rx distortion and self-interference due to joint signal processing over the entire cell at CU.

\begin{figure}[t]
\centering
\includegraphics[width=12cm, height=8cm]{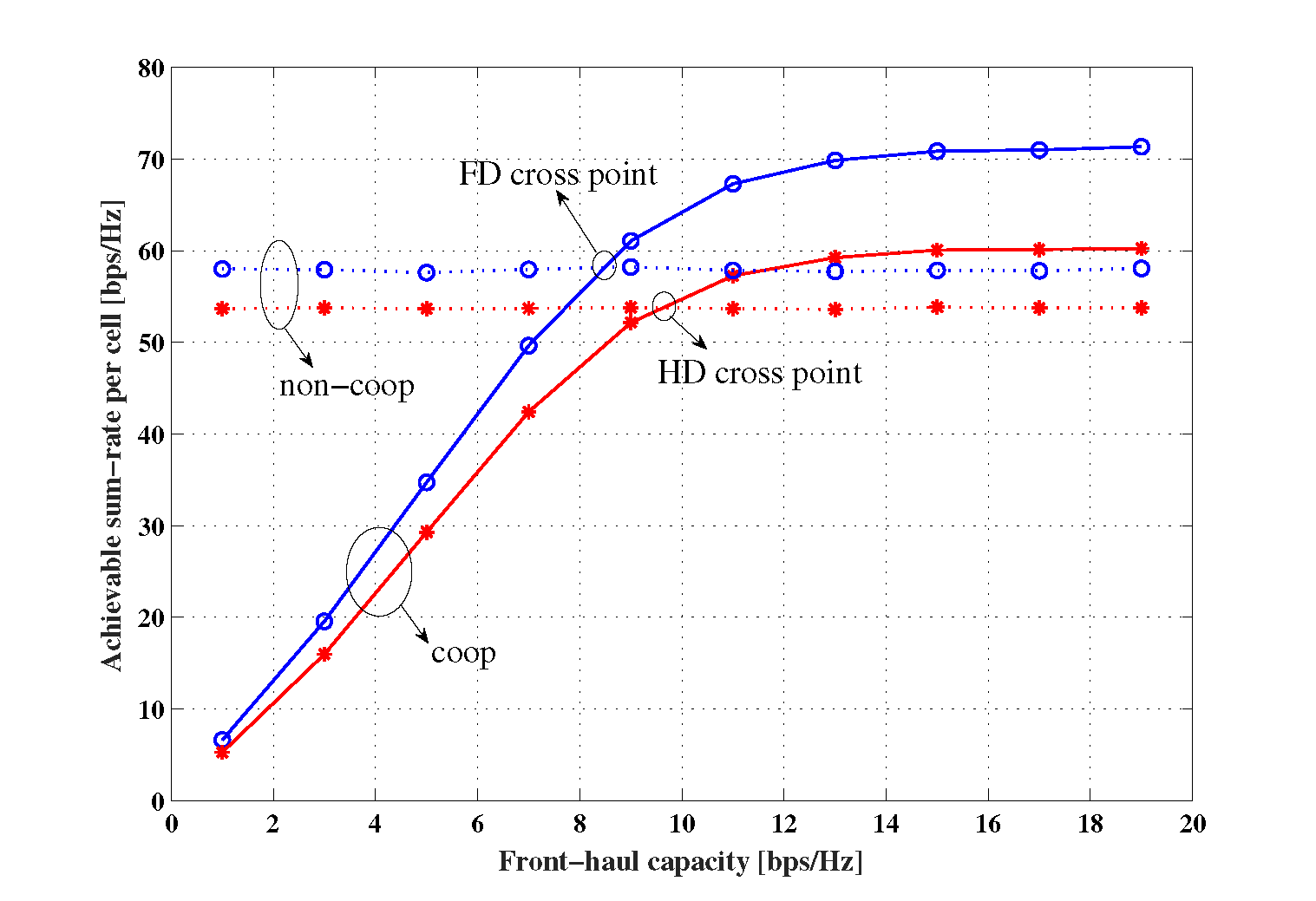}
\caption{Achievable sum-rate with respect to the front-haul capacity. CRAN and NoCRAN stands for the cooperative multicell and non-cooperative multicell systems. 
We consider the values of $N=3, M=128, r=2\text{km}, K=5$ and $P_r=40\text{dBm}$.  } \label{fig: NoCRAN_vs_CRAN}
\end{figure}

In Fig. \ref{fig: NoCRAN_KT}, we compare the achievable sum-rates of the three systems-- full-duplex with nSPT and SPT, and half-duplex systems in non-cooperative multicell and cooperative multicell systems.  In Fig. \ref{fig: NoCRAN_KT}, we show the best-performing system for each regime. For instance, the brighter color in the contour means a superior sum-rate of the full-duplex with SPT, while the darker color implies the superiority of the sum-rate of the half-duplex system. Generally,  both multicell scenarios follow a  similar tendency with respect to $T$ and $K$. With small $T$ which can be regarded as leading to high mobility scenarios or operation at higher frequencies \cite{bjornson2016massive}, the sum-rate of the half-duplex system is better because pilot overhead of $\tau_\text{HD}=2K$ has the smallest value among the three methods.  Then, as $T$ increase, the full-duplex system provides better sum-rate. The advantages of SPT are the reduction of pilot overhead and the provision of  power gain $(\tau_{\text{max}}-\tau_{UU})$, while the UL pilots are interfered  by the  SI pilot. As a result, with large $T$ which can be regarded as indicative of low mobility scenarios or operation at low frequencies, the SPT outperforms nSPT due to additional power gain on UL channel estimation as $K$ decreases. We also observe that, due to the reduction of pilot overhead for SPT,  SPT can even perform better than the half-duplex system  in the regime of small $T$ and $K$, in which the nSPT performs worse than the half-duplex system. 

\begin{figure}[t]
\centering
\includegraphics[width=12cm, height=8.3cm]{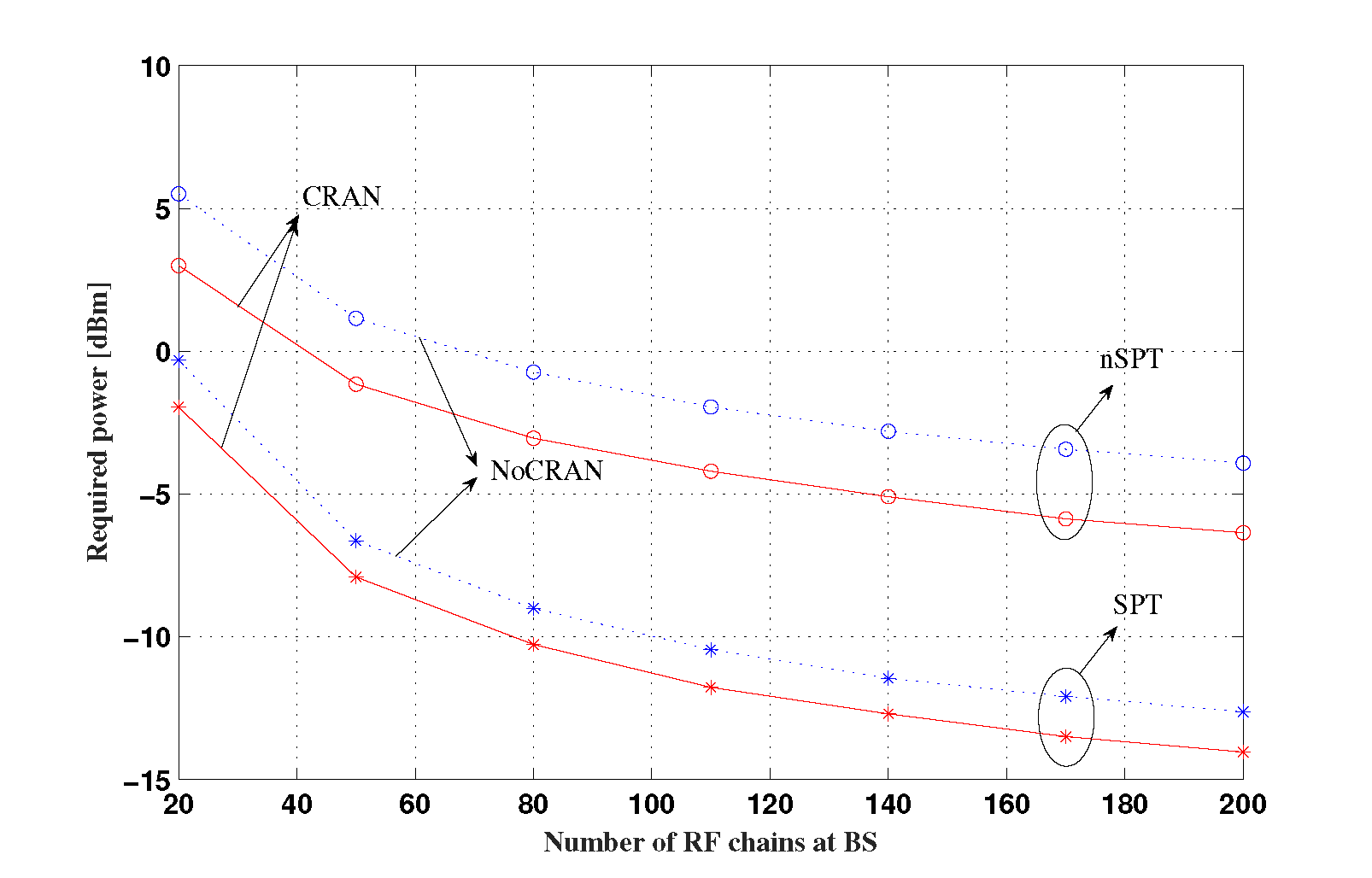}
\caption{Required transmission powers of BS and a user for fixed rate $0.1$bps/Hz. We consider values of $N=3, K=1, r=2\ \text{km}, T=800\ \text{symbols}$ and $C_d=C_u=20\ \text{bps/Hz}$.} \label{fig: R_fixed}
\end{figure}


In Fig. \ref{fig: NoCRAN_vs_CRAN}, we observe the trade-off of BS cooperation and quantization noise induced by limited front-haul capacity. If we have sufficient front-haul capacity, it means here that the front-haul capacity is larger than the cross point, and BS cooperation can be beneficial. In other words, in this regime, the gain obtained by mitigating intercell interference becomes greater than the SINR loss due to quantization noise.   In this context, it is seen that the cooperation of BSs instead decreases the achievable sum-rate as the front-haul capacity decreases.

Figure \ref{fig: R_fixed} shows the required transmission powers of BS and a user when the achievable sum-rate of DL and UL transmissions are fixed. In this context, less required power implies the provision of a better rate under identical power constraints. In line with previous simulation results, the cooperation of BSs requires less power due to intercell interference control only if sufficient front-haul capacity is given. Moreover, under the given environment, the SPT requires less power which means that it provides a higher rate due to the power gain $(\tau_{\max}-\tau_\text{UU})$ attained in estimating the UL channels. 

In Fig. \ref{fig: NMSE}, we observe that the  NMSE of both nSPT and SPT decrease as the transmission power increases, because it is implied that more power is used to send pilots. Furthermore, as discussed in Remark \ref{rem: NMSE}, the NMSE of SPT becomes larger in a very small cell. This is because BS experiences more interference from UL pilot transmission due to the increased large-scale fading gain of UL users. However, even with an increase of the cell radius, we can obtain a similar SI cancellation lever for nSPT by using SPT. 

\begin{figure}[t]
\centering
\includegraphics[width=12cm, height=8.3cm]{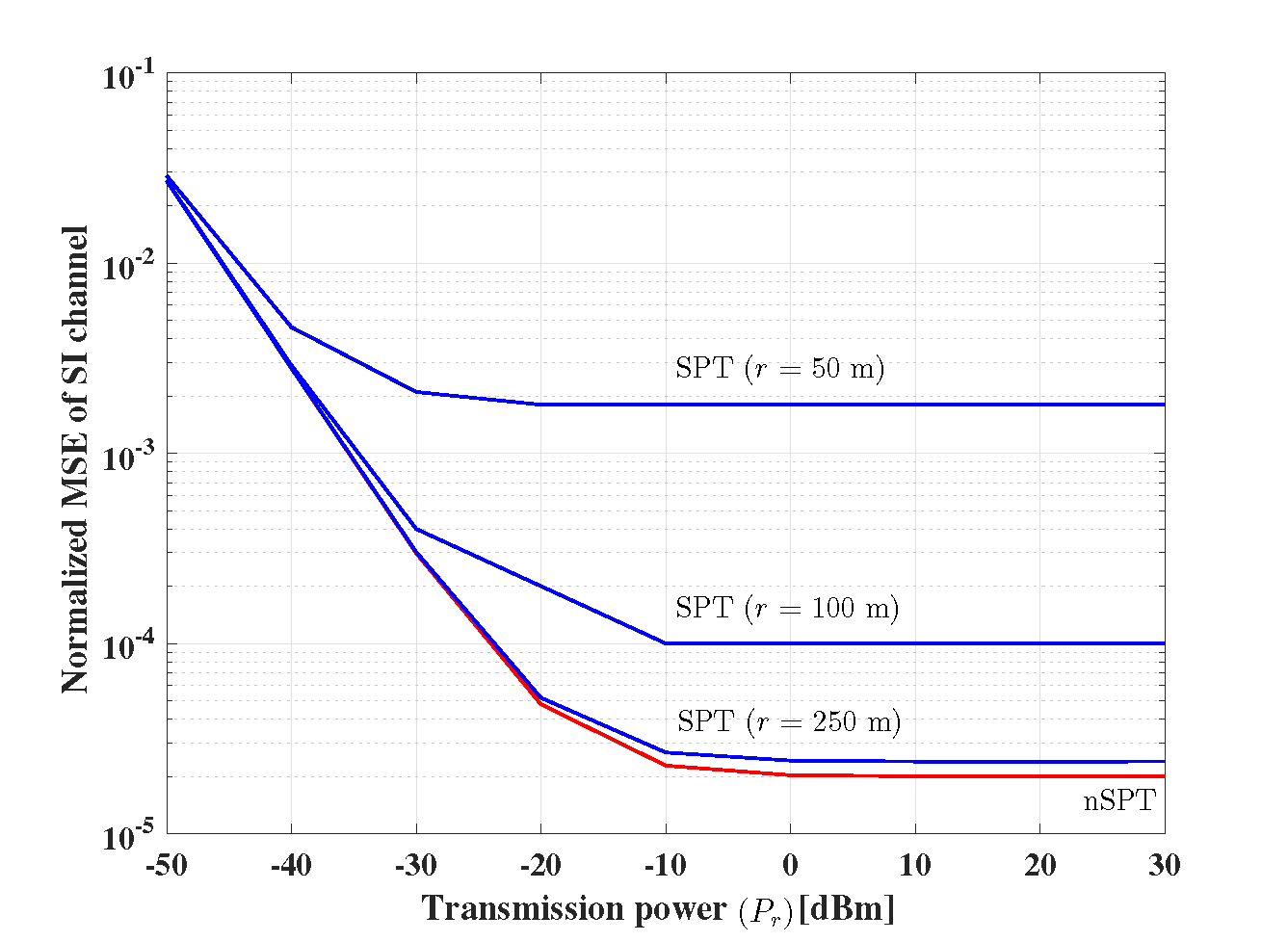}
\caption{NMSE of SI channel for nSPT and SPT with $M=16, N=3, K=5$ and $\alpha=\beta= -50$dB. Moreover, NMSE is shown according to  changing cell radius, $r$. We note that the NMSE of the SI channel for nSPT  depends only on $P_r$ regardless of $r$. } \label{fig: NMSE}
\end{figure}

\section{Conclusion} \label{sec: conclusion}
In this paper, we have investigated the full-duplex large-scale MIMO cellular system while proposing an SPT for channel estimation to resolve the critical pilot overhead problem. We then derived an analytic model considering large-scale fading, pilot contamination, Tx noise and Rx distortion  for the ergodic achievable sum-rate of cell-boundary users in two different multicell scenarios, non-cooperative and cooperative multicell systems. Exploiting the analytic model, we derived the maximum tolerant power of interference and the minimum required coherence time for the reliable region, and investigated the behavior of the asymptotic achievable sum-rate of the full-duplex system. Specifically, as the number of antennas at BS increases, the achievable sum-rate of the full-duplex system is enhanced by means of scaling down of the  transmission power for both multicell systems. However, in the meantime, a cooperative multicell system requires more front-haul capacity to guarantee  higher achievable sum-rate than those of non-cooperative multicell systems.  Using simulation results, we first observed that our analytic model worked well, and our proposed scheme of SPT has advantages of sensitivity to coherence time due to reduced pilot overhead and additional power gain for estimating channels.

\appendices
\section{}
From (\ref{eq: Y_i_SPT}), we estimate the SI channel based on $\tilde{\mathbf{Y}}_i^\text{SPT}=\mathbf{Y}_i^\text{SPT} (\mathbf{\Phi}_i^\text{SI})^*$. The $(\ell,m)^\text{th}$ element of $\tilde{\mathbf{Y}}_i^\text{SPT}$ is
\begin{eqnarray} \label{eq: y_SI}
{\tilde{y}}_{i,\ell m}^\text{SPT}= \sqrt{\tau_{max} P_d} g_{ii,m\ell}^{\text{BS}} + 
\sqrt{\tau_{max} P_u} \sum_{j=1}^N (\mathbf{g}_{ij,m}^{u,r})^{\text{T}} (\mathbf{\Phi}_j^{UL})^\mathrm{T}(\mathbf{\phi}_{i,m}^{SI})^* + n_{i,m\ell} (\mathbf{\phi}_{i,m}^{SI})^* + \\
 \sqrt{\tau_{max} P_u} (\mathbf{g}_{ii,m}^{\text{BS}})^\text{T}(\mathbf{\Psi}_i^\text{SPT})^\mathrm{T}(\mathbf{\phi}_{i,m}^{SI})^* + {\delta}_{i,m\ell}^\text{SPT} (\mathbf{\phi}_{i,m}^{SI})^*. \nonumber
\end{eqnarray}
Since we have  knowledge of the geometric attenuation, we only need to estimate $(\ell,m)^\text{th}$ element of $\mathbf{H}_{ii}^{\text{BS}}$. Based on \cite{sengijpta1995fundamentals}, the estimated channel of  $(\ell,m)^\text{th}$ element of $\mathbf{H}_{ii}^{\text{BS}}$ is
\begin{eqnarray}\label{eq: mmse_eq}
\hat{{h}}_{ii,\ell m}^\text{BS} \hspace{-0.5cm}&&= \mathbf{C}^{\mathrm{T}}_{{{h}}_{ii,\ell m}^\text{BS} {\tilde{y}}_{i,\ell m}^\text{SPT}} \mathbf{C}_{{\tilde{y}}_{i,\ell m}^\text{SPT} {\tilde{y}}_{i,\ell m}^\text{SPT}}^{-1} {\tilde{y}}_{i,\ell m}^\text{SPT} + \mathbb{E} \big[ {{h}}_{ii,\ell m}^\text{BS} \big] - \mathbf{C}_{{{h}}_{ii,\ell m}^\text{BS} {\tilde{y}}_{i,\ell m}^\text{SPT}} \mathbf{C}_{{\tilde{y}}_{i,\ell m}^\text{SPT} {\mathbf{y}}_{i,m}^\text{SPT}} \mathbb{E}\big[ {\tilde{y}}_{i,\ell m}^\text{SPT}  \big] \label{eq: h_hat_SI} \\
&&\overset{(a)} = \mathbf{C}^{\mathrm{T}}_{{{h}}_{ii,\ell m}^\text{BS} {\tilde{y}}_{i,\ell m}^\text{SPT}} \mathbf{C}_{{\tilde{y}}_{i,\ell m}^\text{SPT} {\tilde{y}}_{i,\ell m}^\text{SPT}}^{-1} {\tilde{y}}_{i,\ell m}^\text{SPT}, \nonumber
\end{eqnarray}
(a) is due to the fact that zero mean of the second and third terms. From (\ref{eq: mmse_eq}), we obtain 
\begin{eqnarray}
&&\hspace{-0.5cm}\hat{{\mathbf{h}}}_{ii,\ell m}^\text{BS}= \\
&&\sqrt{\tau_{\max}P_d}\left(  P_u \sum_{j=1}^N \sum_{k=1}^{K_\text{UL}} (\rho_{ij,k}^{u})^2 +(1+\beta) \left\{\tau_{\max} P_d (\rho_{ii,\ell m}^\text{BS})^2 + \alpha P_d \sum_{m=1}^{M_t} (\rho_{ii,\ell m}^\text{BS})^2 + n_0\right\} \right)^{-1} {\tilde{y}}_{i,\ell m}^\text{SPT}. \nonumber
\end{eqnarray}
In a similar manner, we also derive the estimated channel of the $k^\text{th}$ column of $\mathbf{H}_{i,i}^u$ which is described as 
\begin{eqnarray}
\hat{\mathbf{h}}_{ii,k}^u= \sqrt{\tau_\text{max}P_u} \left( \left\{ \tau_{\max}P_u \sum_{j=1}^N (\rho_{ij,k}^{u})^2 + P_d \sum_{m=1}^{M_t} (\rho_{ii,m}^\text{BS})^2\mathrm{Var}[\tilde{\epsilon}_{ii,m}^\text{BS}] +  n_o +A \right\} \mathbf{I}_{M_r} \right) ^{-1} \mathbf{y}_{i,k}^{\text{SPT},u},
\end{eqnarray}
where $A=\beta\{P_d\sum_{m=1}^{M_t}(\rho_{ii,m}^\text{BS})^2+\alpha P_d\sum_{m=1}^{M_t}(\rho_{ii,m}^\text{BS})^2  + n_o \}$. $\Box$

\section{}
Based on \cite[Sec. III-F]{lim2015performance}, we obtain   $\mathbb{E} \left[\log_2 (1+\text{SINR}) \right] \approx \log_2 (1+ \mathbb{E} \left[\text{SINR} \right] )$. 
For the MF precoder, after multiplying $||\hat{\mathbf{G}}_{ii}^d||^2$ to the denominator and numerator of SINR of (\ref{eq:y_NoCRAN}), 
\begin{subequations}
\begin{eqnarray}
\mathbb{E} \left[\text{SINR}_{d,i,k}^\text{MF} \right ]\hspace{-0.7cm} &&=  \mathbb{E} \left[  \frac{P_d ||\hat{\mathbf{g}}_{ii,k}^d||^4 } {I_{d,i,k}^\text{MF}+ I_{\text
{UE},d,i,k}^\text{MF} + P_d ||(\mathbf{\epsilon}_{ii,k}^d)^\text{T} \mathbf{g}_{ii,k}||^2 + ||\hat{\mathbf{G}}_{ii}^d||^2 }  \right] \\ &&\overset{(a)}\approx   \frac{ \mathbb{E} \big[ P_d ||\hat{\mathbf{g}}_{ii,k}^d||^4 \big]} {\mathbb{E} \big[\mathrm{I_{d,i,k}^{MF}}+ \mathrm{I_{UE,d,i,k}^{MF}} + P_d ||(\mathbf{\epsilon}_{ii,k}^d)^\text{T} \mathbf{g}_{ii,k}||^2 + ||\hat{\mathbf{G}}_{ii}^d||^2 \big] },    \nonumber
\end{eqnarray}
\begin{eqnarray}
&&{I_{d,i,k}^\text{MF}}=P_d \sum_{l=1,l \neq k}^{K_{DL}} ||(\mathbf{g}_{ii,k}^d)^\text{T} (\hat{\mathbf{g}}_{ii,l}^d)^*||^2 + P_d ||\hat{\mathbf{G}}_{ii}^d||^2 \sum_{j=1,j\neq i}^N \frac{ || (\mathbf{g}_{ij,k}^d)^\text{T} (\hat{\mathbf{G}}_{jj}^d)^H ||^2}{||\hat{\mathbf{G}}_{jj}^d||^2 },
\end{eqnarray}
\end{subequations}
where ${I_{d,i}^\text{MF}}={I_{d,i,k}^\text{MF}}+{I_{\text{UE},d,i,k}^\text{MF}}, {I_{\text{UE},d,i,k}^\text{MF}}=P_u ||\hat{\mathbf{G}}_{ii}^d||^2 \sum_{j=1}^N ||\mathbf{g}_{ij,k}^\text{UE}||^2$ from (\ref{eq:R_d_MF}). (a) follows Lemma 4 in \cite{lim2015performance}. Based on $||\mathbf{AB}||^2 = \text{tr}(\mathbf{ABBA})$, we calculate each terms. For ZF precoder, we exploit the result $\mathbb{E}\big [ \frac{1}{||\mathbf{f}_{i,k}||^2} \big ] = \hat{\rho}_{ii,k}^2(M_t-K_\text{DL}+1)$, which is expanded from \cite{wong2008array}. Then, the ergodic SINR is 
\begin{eqnarray}
\mathbb{E} \big[\text{SINR}_{d,i,k}^\text{ZF} \big ] \hspace{-0.7cm} &&  \approx \\
&&\hspace{-2.3cm}\frac{\mathbb{E} \big[ P_d \frac{1}{K_\text{DL} ||\mathbf{f}_{i,k}||^2} \big ]}{\mathbb{E} \big[P_d\sum_{j=1, j\neq i}^N ||(\mathbf{h}_{ij,k}^d)^T \mathbf{F}_j ||^2 + \sum_{l=1, l\neq k}^{K_\text{DL}} || \mathbf{\epsilon}_{ii,k}^T \mathbf{f}_{i,l}||^2+ P_u \sum_{j=1}^N ||\mathbf{g}_{ij,k}^d||^2  + P_d ||(\mathbf{\epsilon}_{ii,k}^d)^\text{T} \mathbf{f}_{i,k}||^2 + n_0 \big]. } \nonumber 
\end{eqnarray}
We note that the residual intracell interference exists due to channel estimation error. Using the property of vector normalization, we compute each term. For instance,  $\mathbb{E}||(\mathbf{h}_{ij,k}^d)^T \mathbf{F}_j ||^2 = \mathbb{E}\big[\text{tr}\big((\mathbf{F}_j)^H (\mathbf{h}_{ij,k}^d)^*(\mathbf{h}_{ij,k}^d)^T \mathbf{F}_j \big)\big]=(\rho_{ij,k}^d)^2 \text{E}\big[\text{tr}\big((\mathbf{F}_j)^H \mathbf{F}_j \big)\big]= (\rho_{ij,k}^d)^2$. $\Box$

\section{}
In a similar manner of Appendix B, we first calculate the UL ergodic SINR for the MF detection filter after multiplying $\frac{1}{||\hat{\mathbf{g}}_{ii,k}^u||^2}$ to both the denominator and numerator. We obtain
\begin{subequations}
\begin{eqnarray}
&&\hspace{-1.5cm}\mathbb{E} \big[ \text{SINR}_{u,i,k}^\text{MF} \big] \approx \\
&& \frac{\mathbb{E} \big[ P_u ||\hat{\mathbf{g}}_{ii,k}^u||^2 \big]}{\mathbb{E}\big[ {I_{u,i,k}^\text{MF}} + P_u||(\hat{\mathbf{g}}_{ii,k}^u)^H \mathbf{\epsilon}_{ii,k}^u||^2 + ||\hat{\mathbf{g}}_{ii,k}^u\mathbf{n}_{ui,k}||^2 + ||(\hat{\mathbf{g}}_{ii,k}^u)^H \mathbf{\delta}_i||^2+ || (\hat{\mathbf{g}}_{ii,k}^u)^H \mathbf{n}_{ui,k}||^2 \big] }, \nonumber\\ 
&&\hspace{-1.5cm}{I_{u,i,k}^\text{MF}}=P_u \sum_{l=1,l\neq k}^{K_{UL}} ||(\hat{\mathbf{g}}_{ii,k}^u)^H \mathbf{g}_{ii,l}^u||^2 + P_u \sum_{j=1, j\neq i}^N ||(\hat{\mathbf{g}}_{ii,k}^u)^H \mathbf{G}_{jj}^u ||^2 \nonumber \\ 
&&+ P_d \sum_{j=1, j\neq i}^N \left|\left| (\hat{\mathbf{g}}_{ii,k}^u)^H \mathbf{G}_{ij}^\text{BS} \frac{(\hat{\mathbf{G}}_{jj}^d)^H}{||\hat{\mathbf{G}}_{jj}^d||}\right|\right|^2 + 
P_d \left|\left| (\hat{\mathbf{g}}_{ii,k}^u)^H (\mathbf{G}_{ii}^\text{BS}-\hat{\mathbf{G}}_{ii}^\text{BS}) \frac{(\hat{\mathbf{G}}_{ii}^d)^H}{||\hat{\mathbf{G}}_{ii}^d||}\right|\right|^2.  
\end{eqnarray}
\end{subequations}
For the ZF precoder, we use $\mathbb{E}\big[ ||\mathbf{w}_{i,k}||^2 \big] = \frac{1}{(\hat{\rho}_{ii,k}^u)^2 (M_r-K_\text{UL}+1)}$. Since the remain procedures are similar, we omit the details. $\Box$

\section{}
Since we produce a precoder based on the global CSI, the collective channel vector includes elements of adjacent cells which follows different variance. For the MF precoder, we calculate those term with modification as follows:
\begin{eqnarray}
\mathbb{E} \big[ P_s || \hat{\mathbf{g}}_{i,k}^d ||^4 \big]  =& \hspace{-1.4cm} P_s \mathbb{E} \big[ \big(||\hat{\mathbf{g}}_{i,k}^d||^2 \big)^2 \big] = P_s \mathbb{E} \big[ \big( ||\hat{\mathbf{g}}_{i1,k}^d||^2 + \ldots + ||\hat{\mathbf{g}}_{iN,k}^d||^2 \big)^2 \big] \\
\overset{(a)} =& \hspace{-1.3cm} P_s \mathbb{E} \big [ \sum_{j=1}^N ||\hat{\mathbf{g}}_{ij,k}^d||^2 \big] + P_s \mathbb{E} \big [  \hspace{-.5mm} \sum_{(n,m) \in \Omega_1} \hspace{-2mm} ||\hat{\mathbf{g}}_{in,k}^d||^2 
||\hat{\mathbf{g}}_{im,k}^d||^2 \big] \nonumber \\
\overset{(b)} =& \hspace{-.2cm} P_s \sum_{j=1}^N M_t(M_t+1) \big( \hat{\rho}_{ij,k}^d \big)^4 + P_s\sum_{(n,m) \in \Omega_1} M_t^2 \big(\hat{\rho}_{in,k}^d \big)^2 \big(\hat{\rho}_{im,k}^d \big)^2. \nonumber
\end{eqnarray}
Since all elements are independent, (a) is obtained by developing the equation, where $\Omega_1$ is defined in the previous section. (b) follows the result after averaging each elements out based on  Lemma 1 in \cite{lim2015performance}. For the ZF precoder, we calculate as 
\begin{eqnarray}
\mathbb{E} || (\mathbf{\epsilon}_{i,k})^\mathrm{T} \mathbf{f}_{i,k}||^2 & \hspace{-3.8cm}= \mathbb{E}\big[\mathrm{tr}(\mathbf{f}_{i,k}^\dagger \mathbf{\epsilon}_{i,k}^* \mathbf{\epsilon}_{i,k}^\mathrm{T} \mathbf{f}_{i,k})] \\
&\overset{(a)} \approx (\bar{\rho}_{ik,avg}^d)^2 \mathrm{E}\mathrm{tr}(\mathbf{f}_{i,k}^\dagger \mathbf{f}_{i,k}) 
= (\bar{\rho}_{ik,avg}^d)^2 \frac{1}{K_{DL} N }, \nonumber
\end{eqnarray}
where $(\bar{\rho}_{ik,\text{avg}}^d)^2= \mathbb{E}\big[(\bar{\rho}_{i1,k}^d)^2 \ldots (\bar{\rho}_{iN,k}^d)^2 \big]$. In (a),  $\mathbb{E}[\mathbf{\epsilon}_{i,k}^* \mathbf{\epsilon}_{i,k}^{T}]= \mathrm{diag}\big[ (\bar{\rho}_{i1,k}^d)^2 \mathbf{I}_{M_t} \ldots (\bar{\rho}_{iN,k}^d)^2 \mathbf{I}_{M_t} \big]$ $\approx (\bar{\rho}_{ik,\text{avg}}^d)^2 \mathbf{I}_{M_t N} $. Based on \cite{sengijpta1995fundamentals}, the desired signal term  
is bounded as $\hat{\rho}_{\min}^2 (M_t - K_\text{DL} +1) \leq \frac{1}{||\mathbf{f}_{i,k}||^2} \leq \hat{\rho}_{\max}^2 (M_t - K_\text{DL} +1)$, where $\hat{\rho}_{\min}^2  (\hat{\rho}_{\max}^2)$ is $ \min (\max) \big[ (\hat{\rho}_{i1,k}^d)^2 \ldots (\hat{\rho}_{iN,k}^d)^2 \big]$. Then, we approximate $\frac{1}{||\mathbf{f}_{i,k}||^2}$ as $\hat{\rho}_\text{avg}^2 (M_t - K_\text{DL} +1)$, where $\hat{\rho}_\text{avg}^2= \mathbb{E}  \big[ (\hat{\rho}_{i1,k}^d)^2 \ldots (\hat{\rho}_{iN,k}^d)^2 \big]$. The ergodic SINR is obtained based on same mathematical background as previous appendices.  $\Box$

\section{}
As we discussed in Appendix D, the channel vector contains elements with different variance. Here, we describe the most complicated term to calculate as follows: 
\begin{subequations}
\begin{eqnarray}
 && \hspace{-.7cm}\mathbb{E} \left[ ||\left(\mathbf{g}_{i,k}^u\right)^H \mathbf{G}^\text{BS}_\text{off}\left(\mathbf{G}^d\right)^H||^2 \right] \hspace{-1cm} \\
&&\hspace{-0.7cm}= \mathbb{E} \left[
\mathrm{tr} \left(\mathbf{G}^d \left(\mathbf{G}^\text{BS}_\text{off}\right)^H \mathbf{g}_{j,k}^u \left(\mathbf{g}_{j,k}^u\right)^H \mathbf{G}^\text{BS}_\text{off}\left(\mathbf{G}^d\right)^H \right) \right] = \mathbb{E} \left[\mathrm{tr} \left(\left(\mathbf{G}^\text{BS}_\text{off}\right)^H \mathbf{g}_{j,k}^u \left(\mathbf{g}_{j,k}^u\right)^H \mathbf{G}^\text{BS}_\text{off}\left(\mathbf{G}^d\right)^H \mathbf{G}^d \right) \right] \nonumber \\
&&\hspace{-0.7cm}\overset{(a)}= \mathrm{tr} \Bigg( \mathrm{blkdiag} \left[\left(\sum_{n=2}^N \left(\hat{\rho}_{in,k}^u\right)^2 \left(\rho^\text{BS}_{n,1}\right)^2 \right) \mathbf{I}_{M_t}, \ldots,  \left(\sum_{n=1}^{N-1} \left(\hat{\rho}_{in,k}^u\right)^2 \left(\rho^\text{BS}_{n,N}\right)^2  \right)  \mathbf{I}_{M_t} \right]  \\ \nonumber  
&&\hspace{4cm}\times \mathrm{blkdiag}\left[\left( \sum_{n=1}^N \sum_{k=1}^{K_{DL}} (\hat{\rho}_{n1,k}^d)^2 \right) \mathbf{I}_{M_t}, \ldots, \left( \sum_{n=1}^N \sum_{k=1}^{K_{DL}} (\hat{\rho}_{nN,k}^d)^2 \right) \mathbf{I}_{M_t} \right]\Bigg) \nonumber \\  
&&\hspace{-0.7cm}= \sum_{j=1}^N \left \{ \left(\sum_{n=1}^N \left(\hat{\rho}_{ik,n}^u\right)^2 \left(\rho^\text{BS}_{n,j}\right)^2  \right) \left(\sum_{n=1}^N \sum_{k=1}^{K_{DL}} \left(\hat{\rho}_{ni,k}^d\right)^2 \right) \right\} M_t, \hspace{2cm} \nonumber
\end{eqnarray}
\end{subequations}

$(a)$ follows the result from calculating $(\mathbf{G}^\text{BS}_\text{off})^H \mathbf{g}_{j,k}^u (\mathbf{g}_{j,k}^u)^H \mathbf{G}^\text{BS}_\text{off}$ and $ (\mathbf{G}^d)^H \mathbf{G}^d $, respectively. For the ZF detection filter, $||\mathbf{w}_{i,k}^T \mathbf{G}^\text{BS}_\text{off} \mathbf{F}^\text{C}||^2 =\mathbb{E} \big[ \mathrm{tr} \big(\mathbf{w}_{i,k}^{T} \mathbf{G}^\text{BS}_\text{off} \mathbf{F}^C \mathbf{F}^H (\mathbf{G}^\text{BS}_\text{off})^H \mathbf{w}_{i,k}^* \big) \big]$ is also the most complicated term to calculate whereby $ \mathbb{E} \big[ \mathbf{F}^\text{C}\mathbf{F}^H \big] = \frac{1}{M_t N} \mathbf{I}_{M_t N} $ due to vector-normalization. And, $\mathbb{E} \big[ \mathbf{G}^\text{BS}_\text{off}(\mathbf{G}_\text{off}^\text{BS})^H \big]$ becomes block-diagonal matrix whose block-diagonal element is $g_i=\sum_{j=1, j \neq i}^N M_t (\rho_{ji}^\text{BS})^2$ for $i=1 \ldots N$. Thus, we obtain the approximation $\mathbb{E} \big[ \mathbf{G}^\text{BS}_\text{off}(\mathbf{G}^\text{BS}_\text{off})^H \big] \approx g_{avg} \mathbf{I}_{M_t N} $.
With the result of $\mathbb{E}\big[ ||\mathbf{w}_{i,k}^{T}||^2 \big] \approx \frac{1}{(\hat{\rho}_{ik,\text{avg}}^u)^2(M_rN - K_\text{UL}N +1)}$, discussed in Appendix C, we calculate $P_s ||\mathbf{w}_{i,k}^{T} \mathbf{G}^\text{BS}_\text{off} \mathbf{F}^\text{C} ||^2$. For the remaining terms, we calculate based on the mathematical background that was used in the preceding appendices. $\Box$

\bibliographystyle{IEEEtran}
\bibliography{refKJW}

\end{document}